\theoremstyle{definition}
\newtheorem{theorem}{Theorem}[section]
\newtheorem{lemma}{Lemma}[section]
\newtheorem{remark}{Remark}[section]
\DeclareMathAlphabet{\mathsfsl}{OT1}{cmss}{m}{sl}
\numberwithin{equation}{section}
\newcommand{\D}{\mathrm{d}}
\def\Lb{\underline{L}}
\def\omegab{\underline{\omega}}
\def\ub{\underline{u}}
\def\Cb{\underline{C}}
\newcommand{\Db}{\underline{D}}
\newcommand{\hb}{\underline{h}}
\def\nablas{\mbox{$\nabla \mkern -13mu /$ }}
\def\gs{\mbox{$g \mkern -9mu /$}}
\def\trs{\mbox{$\mathrm{tr} \mkern -9mu /$}}
\begin{document}

\title{Scalar perturbations to naked singularities of perfect fluid}

\author[Junbin Li]{Junbin Li}
\address{Department of Mathematics, Sun Yat-sen University, Guangzhou, China}
\email{lijunbin@mail.sysu.edu.cn}

\author[Xi-Ping Zhu]{Xi-Ping Zhu}
\address{Department of Mathematics, Sun Yat-sen University, Guangzhou, China}
\email{stszxp@mail.sysu.edu.cn}

 \begin{abstract}
  
 In this paper, we study the instability of naked singularities arising in the Einstein equations coupled with isothermal perfect fluid.  We show that the spherically symmetric self-similar naked singularities of this system, are unstable to trapped surface formation, under $C^{1,\alpha}$ perturbations of an external massless scalar field.  We viewed this as a toy model in studying the instability of these naked singularities under gravitational perturbations in the original Einstein--Euler system which is non-spherically symmetric. 
  
  \end{abstract}

\maketitle


\setcounter{tocdepth}{1}


\section{Introduction}

\subsection{Previous works and a rough version of the main result} 

The weak cosmic censorship conjecture, one of the main conjectures in general relativity, asserts that naked singularities will not appear in gravitational collapse generically.  The original statement proposed by Penrose \cite{Pen69}  said that naked singularities will not appear in any cases, but soon examples of naked singularities in gravitational collapses were found in spherically symmetric massless scalar field collapsing (for example \cite{Cho93} and \cite{Chr94}). To insist on the validity of weak cosmic censorship, one hopes to prove  that naked singularities are unstable under small perturbations, and therefore can only appear in very exceptional cases.

Soon after the discovery of naked singularities, Christodoulou \cite{Chr99} was able to show all possible naked singularities (not only those constructed in \cite{Chr94}) are unstable in the context of spherically symmetric solutions of Einstein--scalar field equations. Hence the weak cosmic censorship conjecture is true in this context, which is also studied in spherically symmetric collapse of charged scalar field  by An--Tan \cite{A-T24}. In \cite{Liu-Li18}, we were able to give a constructive argument based on apriori estimates of Christodoulou's proof in \cite{Chr99}, instead of contradiction argument there,  and show that  \cite{Li-Liu22} all possible spherically symmetric naked singularities are unstable to trapped surface formation under small $C^1$ gravitational perturbations (which must be outside spherically symmetric by Birkhoff Theorem).  A recent work by An \cite{An24} showed that when the  background naked singularity solutions are those constructed by Christodoulou in \cite{Chr94}, which are continuously self-similar, fully anisotropic apparent horizon (not only trapped surfaces) can form under small gravitational perturbations in scale--critical norm.

In this paper we focus on the naked singularity model which is not of zero rest mass. The main difference from the before mentioned systems is that the matter field itself has a slower  wave speed than light. Christodoulou \cite{Chr84} showed that spherically symmetric collapse of inhomogenuous dust cloud starting at rest would lead to stable formation of naked singularities, and more general  cases were considered (see for example \cite{S-J96} and the references therein). This model may not be physical since pressure will not vanish at the final stage of a collapsing star. When the pressure comes in, the corresponding Einstein--Euler system is rather complicated, a general picture of gravitational collapse cannot be easily obtained. Nevertheless, naked singularities in Einstein--Euler were still found by Ori--Piran (see \cite{O-P90} and the references therein) by numerical method, assuming spherical symmetry and continuous self-similarity of the spacetime, and the rigorous proof of this existence was provided recently by Guo--Hazic--Jang \cite{G-H-J23}. 

It is then natural to ask about the instability of such kind of naked singularities, arising in gravitational collapse of (spherically symmetric) perfect fluid. At first glance, this problem is about the long time dynamic of self-gravitating relativistic fluid, which is not  well-understood even in spherical symmetry in the literature. Nevertheless, we will try to study whether we can produce instability from the left hand side of the Einstein equations, i.e., the gravity part, as we are treating the whole Einstein--Euler system.  The price is that we should go beyond spherical symmetry due to Birkhoff Theorem. To illustrate our main ideas, we will study the instability under spherically symmetric perturbations from an external massless real scalar field in this paper. As stated in Christodoulou's survey \cite{Chr99cqg}, this is a good simplified model problem of Einstein's theory of gravitation in case we impose spherical symmetry. In a further work \cite{L-Z3}, we will drop the scalar field and study the original Einstein--Euler system.

More precisely, we will study the following system of Einstein equations
\begin{equation}\label{EinsteinSE}\mathbf{Ric}_{\alpha\beta}-\frac{1}{2}\mathbf{R}g_{\alpha\beta}=2\mathbf{T}_{\alpha\beta}=2\mathbf{T}_{\alpha\beta}^{sc}+2\mathbf{T}_{\alpha\beta}^{fl},\end{equation}
where the energy momentum tensors of the coupled scalar field and perfect fluid are
$$\mathbf{T}_{\alpha\beta}^{sc}=\partial_\alpha\phi\partial_\beta\phi-\frac{1}{2}g_{\alpha\beta}g^{\mu\nu}\partial_\mu\phi\partial_\nu\phi$$
and
$$\mathbf{T}_{\alpha\beta}^{fl}=(\varrho+p) U_\alpha U_\beta+pg_{\alpha\beta}=(1+\kappa)\varrho U_\alpha U_\beta+\kappa\varrho g_{\alpha\beta},$$
here the equation of state for the fluid is the isothermal one: $p=\kappa\varrho$ where $\kappa\in(0,1)$ is a constant, the square of sound speed. $\varrho,p, U$ are the density, pressure and $4$-velocity of the fluid, and $U^\alpha U_\alpha=-1$. The spherically symmetric and continuously self-similar naked singularity solutions of isothermal perfect fluid (for example, those in \cite{O-P90, G-H-J23})) are solutions of \eqref{EinsteinSE} with $\phi\equiv0$. A rough version of the main result of this paper is the following.
\begin{theorem}\label{roughmain}
The spherically symmetric and continuously self-similar naked singularity solutions as constructed in \cite{O-P90, G-H-J23}, as a solution of the whole system \eqref{EinsteinSE}, are unstable to trapped surface formation under $C^{1,\alpha}$ ($\alpha$ depends on the particular naked singularity solution) perturbations of scalar field $\phi$.
\end{theorem}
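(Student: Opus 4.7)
The plan is to adapt the characteristic-initial-value strategy used in \cite{Chr99} and reorganised constructively in \cite{Liu-Li18, Li-Liu22} to the coupled Einstein--scalar--fluid system. First, I would work in a double null coordinate system $(u,v,\theta,\varphi)$ adapted to the self-similar structure of the background: the scaling origin $\mathcal{O}$ is the singular endpoint, its past light cone is the singular boundary $\Cb_0=\{u=0\}$, and outside $\Cb_0$ the background is a smooth continuously self-similar solution with $\phi\equiv 0$ and a regular fluid profile $(\varrho_0,U_0)$. Fix a sequence $u_0<0$ approaching $0$; on the outgoing cone $C_{u_0}$ one retains the background data, while on the incoming cone $\Cb_{u_0}\cap\{0\le v\le v_*\}$ one prescribes a $C^{1,\alpha}$ scalar field perturbation that is small in $L^\infty$ but has derivatives of order unity in the scale-invariant (self-similar) norm, leaving the fluid data unperturbed on that cone.

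Second, the main analytic task is to establish a priori estimates, uniform as $u_0\to 0^-$, for the resulting solution on a slab $\{u_0\le u\le 0,\ 0\le v\le v_*\}$. One would decompose the unknowns into background plus perturbation: the metric functions $(\Omega,r,\gs)$ and the Ricci/connection coefficients $(\chi,\chib,\eta,\etab,\omega,\omegab)$, the curvature components, the scalar field $\phi$, and the fluid variables $(\varrho,U)$. Propagate these through the null structure equations, the Bianchi equations, the wave equation $\Box_g\phi=0$, and the relativistic Euler equations $\nabla^\alpha \mathbf{T}^{fl}_{\alpha\beta}=-\nabla^\alpha \mathbf{T}^{sc}_{\alpha\beta}$. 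The weights in the estimates should be chosen in accordance with the self-similar scaling, so that the perturbation amplitude is measured in a scale-critical fashion, as in \cite{Li-Liu22}; the smallness of $\phi$ in $L^\infty$ keeps the metric and fluid close to the background, while the unit-size first derivatives provide the driving term.

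Third, once smallness is propagated, trapped surface formation follows by the Hawking-mass mechanism of Christodoulou. Along any outgoing cone $C_u$, integrate the Raychaudhuri-type equation for $\partial_v(2m/r)$: the integrand is dominated by $r|\partial_v\phi|^2$ plus fluid contributions which, by the a priori estimates, differ from the background by a controllably small amount. Choosing the prescribed $\int|\partial_v\phi|^2\,dv$ large relative to $|u|$ but still $C^{1,\alpha}$-consistent, one drives $2m/r$ up to $1$ on a sphere $S_{u,v}$ inside the slab, producing a trapped surface, and then letting $u_0\to 0$ yields the instability statement.

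The principal obstacle is the fluid sector. Unlike in the pure scalar field case, the matter propagates along sound cones of speed $\sqrt{\kappa}<1$, so fluid perturbations generated by the metric response to $\phi$ do not decouple along null hypersurfaces and do not enjoy wave-type conservation laws. One must (i) show that no shock develops in the short-pulse region, which forces the Hölder exponent $\alpha$ to be chosen in a range depending on $\kappa$ and on the self-similar profile so that the fluid regularity is preserved under transport along $U$; (ii) design weighted norms that treat null-directional and sound-directional regularity simultaneously so that the fluid contribution to the stress-energy never overtakes the scalar field contribution to $\partial_v(2m/r)$ before a trapped surface is formed; and (iii) handle the mixed principal terms produced by the coupling $\mathbf{T}^{sc}+\mathbf{T}^{fl}$ through the Einstein equations, where the loss of derivatives must be absorbed using the self-similar hierarchy of the background. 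This is where the bulk of the paper's technical work is expected to lie.
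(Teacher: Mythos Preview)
Your proposal has the characteristic data reversed and, more importantly, misses the central amplification mechanism. In the paper the perturbation of $\phi$ is prescribed on the \emph{outgoing} cone $C_{u_0}$ for a \emph{fixed} $u_0$, while the data on the \emph{incoming} past cone $\Cb_0$ of the singularity is the unperturbed background; one does not send $u_0\to0$. The instability is driven by the geometric degeneracy of $\Cb_0$: from the Raychaudhuri equation restricted to $\Cb_0$ and the self-similar bound $r^2\mathbf{Ric}(\partial_u,\partial_u)\gtrsim 1$, the lapse satisfies $\Omega_0(u)\lesssim|u|^{\beta}$ for some $\beta>0$ determined by the background. The trapped-surface step is then the outgoing Raychaudhuri equation $Dh=-r\Omega^{-2}(L\phi)^2-\ldots$, integrated on $C_{u_1}$ for $u_1$ close to $0$: the factor $\Omega^{-2}\gtrsim|u_1|^{-2\beta}$ magnifies the integral of $(L\phi)^2$, so an initial profile $rL\phi\sim t\,\ub^{\alpha}$ on $C_{u_0}$ (with $\alpha<\tfrac{2\beta}{2\beta+3}$) forces $h<0$ at $(\delta,u_1)$ even though $t\to0$. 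Your Hawking-mass account via $\partial_v(2m/r)$ never invokes this $\Omega^{-2}$ amplification, so it cannot explain why arbitrarily $C^{1,\alpha}$-small data suffice, and your attribution of $\alpha$ to shock prevention is incorrect: $\alpha$ is dictated by the decay rate $\beta$ of $\Omega_0$.

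Your assessment of the fluid sector is also off target. There is no loss-of-derivatives issue and no need for Bianchi equations or elaborate mixed null/acoustic norms; in spherical symmetry everything reduces to transport ODEs. The paper's key structural observation is that the acoustical characteristics $\gamma_\pm$ (integral curves of $U_\pm$) through any point of the slab $[0,\delta]\times[u_0,u_1]$ have length $\tau_\pm\lesssim\Omega_0^2(u)\,\ub$, because $U_\pm(\ub)\sim\Omega^{-2}U_{\Lb}$ is large. Hence the Riemann invariants $\varrho^{2\sqrt\kappa/(1+\kappa)}U_{\Lb}^{2}$, $\varrho^{2\sqrt\kappa/(1+\kappa)}U_{L'}^{2}$ vary by at most a bounded factor from their values on $\Cb_0\cup C_{u_0}$, and a single commutation with $\Lb$ closes the first-derivative estimate precluding shocks. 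In short, the fluid part is a \emph{local} perturbation problem once one knows $\tau_\pm$ is short, while the scalar-field part is semi-global and is where the smallness of $\Omega_0$ is exploited; your proposal has the relative difficulty of the two sectors inverted.
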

The precise version is Theorem \ref{main}, presented in the Section \ref{subsection:precise}. Although the naked singularities constructed in \cite{O-P90, G-H-J23} are the only known examples, we remark that our theorem (see precise version below) applies to much wider class of spherically symmetric naked singularity solutions that could be far from being continuously self-similar.

\subsection{The geometry of the past null cone of the singularity} In the proof of naked singularity instability before, the most important feature is the geometry of the past null cone. Suppose that $\Cb_0$ is the past null cone of the singularity $\mathcal{O}$ (see Figure \ref{figure:naked}), in the case of Einstein--scalar field system, $\mathcal{O}$ is a singularity (of $C^1$ solution, see \cite{Chr93}) if and only if the mass ratio $\mu=\frac{2m}{r}\nrightarrow0$ when approaching $\mathcal{O}$ from the past of $\mathcal{O}$. This implies the integral along $\Cb_0$, which is a measurement of the blue-shift of light received by $\mathcal{O}$,
\begin{equation}\label{blueshift}\int_{\Cb_0}\frac{\mu}{1-\mu}\frac{\D r}{r}\end{equation}
is infinite. Christodoulou's proof in \cite{Chr99} makes essential use of this quantity. In our new proof \cite{Liu-Li18}, we observed that the divergence of \eqref{blueshift} implies that the lapse function $\Omega$ of a double null foliation $(\ub,u)$ around $\Cb_0$, with $u=-r$ on $\Cb_0$, tends to zero as approaching $\mathcal{O}$ along $\Cb_0$, which is one of the most important ingredients of our  proof in \cite{Liu-Li18}.

\begin{center}
\begin{figure}
\includegraphics[width=2 in]{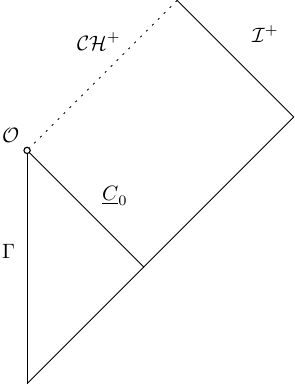}
\caption{\label{figure:naked}}
\end{figure}
\end{center}

We are thus led to see whether other naked singularity solutions we are interested in verify the same condition.  In order to check the behavior of $\Omega$ along $\Cb_0$, we turn to the Raychaudhuri equation along incoming direction, the equation \eqref{Dbhb} in the section below. Together with the gauge choice 
\begin{equation}\label{gaugeonCb0}u=-r\end{equation}
 along $\Cb_0$, we have $\hb=\frac{\partial r}{\partial u}=-1$ and hence \eqref{Dbhb} becomes
\begin{equation}\label{Cb0Omega}
\frac{\partial}{\partial u}\log\Omega\big|_{\Cb_0}=-\frac{1}{4}r\mathbf{Ric}(\partial_u,\partial_u)\big|_{\Cb_0}.
\end{equation}
The question then reduces to see whether the integral of the right hand side on $\Cb_0$ diverges. It turns out that it is related to the so-called strength of the singularity, introduced by Tipler in \cite{Tip77}, which was widely investigated in physics literature. Roughly speaking, a singularity is called a strong curvature singularity (along some causal geodesic approaching it, say $\gamma$), if the volume form generated by Jacobi fields along $\gamma$ tends to zero. A sufficient condition for the singularity being strong along $\gamma$, derived in \cite{Cl-Kr85}, is 
\begin{equation}\label{strongcurvature}\lim_{s\to0}s^2\mathbf{Ric}(\gamma',\gamma')\ne0,\end{equation}
where $s$ is the affine parameter of $\gamma$, and $s\to0$ represents the singularity. It had been checked in \cite{O-P90} that in the naked singularity solutions considered there, condition \eqref{strongcurvature} is true along radial null geodesics emerge from and terminate at $\mathcal{O}$. It can then be seen from equation \eqref{Cb0Omega} that $\Omega\big|_{\Cb_0}\to0$.  This is because if $\Omega\big|_{\Cb_0}\nrightarrow0$, then it has a positive lower bound since it is monotonely decreasing by negativity of the right hand side of \eqref{Cb0Omega}. Then because $\partial_u=\Omega\big|_{\Cb_0}^2\partial_s$, we know $r\approx s$ and $\partial_u\approx\partial_s$. 
Therefore \eqref{strongcurvature} implies that the right hand side of \eqref{Cb0Omega} looks like $\frac{1}{r}$, then by integrating \eqref{Cb0Omega}, we have $\Omega\big|_{\Cb_0}\to0$, a contradiction. It should be noted that $-4\Omega^2\D \ub\D u$ is simply the volume form of the $2$-dimensional quotient spacetime generated by  $\partial_{\ub}$ and $\partial_u$.

In fact, we have a more direct argument to see $\Omega\to0$, and moreover  an accurate rate. Recall that  a spherically symmetric spacetime is called (continuously) self-similar, if there exists a conformal Killing field $S$ such that
$$\mathcal{L}_S \hat{g}=2\hat{g}, Sr=r,$$
and in particular $\mathcal{L}_Sg=2g$, where $\hat{g}$ is the Lorentzian metric induced on the $2$-dimensional quotient spacetime, $r$ is the area radius of the orbit spheres.
Let us assume that, which is true as checked in \cite{O-P90},  $\Cb_0$, the past null cone of the singularity $\mathcal{O}$, is invariant under the flow generated by $S$\footnote{In fact, the meaning of the term ``the past null cone of the singularity $\mathcal{O}$'' should be clarified since $\mathcal{O}$ is not included in the spacetime. However, we can simply take $\Cb_0$ to be the level set of self-similar parameter that is null, if it exists.}. Then on $\Cb_0$ we have $S=r\frac{\partial}{\partial r}=u\frac{\partial}{\partial u}$. The self-similarity also implies that $\mathcal{L}_S\mathbf{Ric}=0$, and hence $S(\mathbf{Ric}(S,S))=0$. Restricted on $\Cb_0$, we will have
\begin{equation}\label{RicCb0}\mathbf{Ric}\left(u\frac{\partial}{\partial u},u\frac{\partial}{\partial u}\right)=2c(=\text{const}\ne0).
\end{equation}
Plugging in \eqref{Cb0Omega}, the right hand side then looks like $\frac{1}{r}$, and then
\begin{equation}\label{ssOmegaCb0}\Omega\big|_{\Cb_0}\approx|u|^c.\end{equation}
Note that this argument applies to all kinds of Einstein--matter field systems since we only use the invariance of the Ricci tensor under self-similarity. It also applies in vacuum if we simply replace the Ricci tensor (which vanishes in vacuum) by the shear tensor (see naked singularity solutions in vacuum constructed in \cite{R-Sh23}). Of course, for a particular matter field system, the variables of matter field should also obey self-similarity. For example, in self-similar Einstein--Euler system with isothermal perfect fluid under consideration, the fluid variables verify on $\Cb_0$ that
\begin{equation}\label{ssfluidCb0}
r^2\varrho=\text{const},  U_u:=g(U,\partial_u)=\text{const}.
\end{equation}
These conditions are of course consistent with \eqref{RicCb0}.

\subsection{The precise statement of the main result and comments}\label{subsection:precise}

Now we are going to give the precise statement we want to prove. The same to before, we consider a double null characteristic problem of the system \eqref{EinsteinSE}, with initial data given on two intersecting null cone, $\Cb_0$ and $C_{u_0}$, where $\Cb_0$ is the past null cone of the singularity $\mathcal{O}$, and $C_{u_0}$ is the outgoing null cone whose intersection with $\Cb_0$ is the sphere $r=-u_0$.

The data on $\Cb_0$ is induced from the naked singularity solutions in \cite{O-P90, G-H-J23}, satisfies \eqref{ssfluidCb0}. In fact, our method is robust and we will allow $r^2\varrho$ and $U_u$ being only bounded above and below by positive constants, together with their $u\partial_u$ derivatives being bounded. In this case, $r^2\mathbf{Ric}(\partial_u,\partial_u)$ on $\Cb_0$ is bounded above and below by positive constants, and hence
\begin{equation}\label{OmegaCb0upperlower}|u|^{\alpha_1}\lesssim \Omega\big|_{\Cb_0}\lesssim |u|^{\alpha_2}\end{equation}
holds with $\alpha_1\geqslant\alpha_2$, instead of \eqref{ssOmegaCb0}. In the proof, we only need the upper bound of $\Omega\big|_{\Cb_0}$. We remark that in our previous works \cite{Liu-Li18, Li-Liu22} on Einstein--scalar field system, only $\Omega\big|_{\Cb_0}\to0$ is needed, no other informations about $\Omega\big|_{\Cb_0}$ and even no informations on the data on $\Cb_0$ are needed. The data on $C_{u_0}$ is spherically symmetric, consists of the fluid variables $(\varrho, U)$ and the scalar field $\phi$, or more precisely, the derivative $\partial_{\ub}\phi$ along $C_{u_0}$.  For we only consider smooth solution and $\phi$ vanishes on $\Cb_0$, $\partial_{\ub}\phi$ is required to be smooth up to $C_{u_0}\cap \Cb_0$ where $\partial_{\ub}\phi=0$. The main result of this paper is the following (also see Figure \ref{figure:perturbations}).

\begin{theorem}\label{main}
Consider double null characteristic problem of the system \eqref{EinsteinSE} in spherical symmetry. Suppose that on the incoming null cone $\Cb_0$, endowed with a parameter $u=-r$, $\phi\equiv0$ and there is an $B\geqslant 1$ such that on $\Cb_0$,
$$ r^2\varrho, (r^2\varrho)^{-1}, |U_u|, |(U_u)^{-1}|, |r^3\partial_u\varrho|, |r\partial_u U_u| \leqslant B.$$
We also assume the Hawking mass $m\big|_{\Cb_0}\geqslant 0$ and all spherical sections of $\Cb_0$ are not trapped\footnote{If there is a trapped section on $\Cb_0$, then the vertex singularity is not naked.}. Then there is a one-parameter family of initial $\phi_t$ (for $t$ sufficiently small) on $C_{u_0}$, endowed with a parameter $\ub\in[0,1]$ such that 
$$r \partial_{\ub}\phi_t\to0, t\to0$$
in $C^{\alpha}$ topology (as a function of $\ub\in[0,1]$), for some $\alpha>0$ depending on $B$, and the maximal future development of such initial data has a closed trapped surface preceding $\mathcal{O}$ for $t\ne0$.

Moreover, the perturbations are generic in the following sense: the set of all data $r\partial_{\ub}\phi$ on $C_{u_0}$ such that there is a closed trapped surface preceding $\mathcal{O}$ in its maximal future development, contains an open set of which zero data (leading to naked singularity solution) is its limit point in $C^\alpha$ topology.
\end{theorem}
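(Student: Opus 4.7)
The approach follows the a priori estimate philosophy of \cite{Liu-Li18, Li-Liu22}, upgraded to handle the coupled perfect fluid, and it crucially exploits the geometric input \eqref{OmegaCb0upperlower} that $\Omega|_{\Cb_0}$ decays polynomially towards the singularity $\mathcal{O}$. I would formulate the problem as a characteristic initial value problem on a double null region $\mathcal{D}$ bounded by $\Cb_0$ and $C_{u_0}$, and close the analysis by a bootstrap argument on $\mathcal{D}$.

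\emph{Step 1 (perturbation family).} On $C_{u_0}$, parametrized by $\ub\in[0,1]$, choose a one-parameter family $\phi_t$ of scalar-field initial data with $\partial_{\ub}\phi_t$ concentrated near $\ub=0$, i.e., near the intersection with $\Cb_0$, and scaled so that $\|r\partial_{\ub}\phi_t\|_{C^\alpha([0,1])}\to 0$ as $t\to 0$, while the scale-critical energy $\int_0^1 r^2\Omega^{-2}(\partial_{\ub}\phi_t)^2\,\D\ub$ stays bounded below. This is possible because close to $\Cb_0$ the factor $\Omega^{-2}$ is large by \eqref{OmegaCb0upperlower}, so a pointwise small perturbation still carries substantial ``effective'' energy, exactly as in the Einstein--scalar field analysis of \cite{Liu-Li18,Li-Liu22}.

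\emph{Step 2 (a priori estimates on $\mathcal{D}$).} The core analytical work is to propagate bounds in scale-critical norms, with explicit $\Omega$ weights, for the fluid variables $\varrho,U$, the scalar field $\phi$ and its derivatives, and the geometric quantities $\chi,\chib,\eta,\Omega$ together with null curvature. Since $\kappa<1$, the fluid characteristics lie strictly inside the null cone, so the data on $\Cb_0$ and on $C_{u_0}$ (taken equal to the background fluid data there) together determine $\varrho$ and $U$ in $\mathcal{D}$ via transport along the flow of $U$; the hypothesis bounds on $r^2\varrho$, $U_u$, and their $u\partial_u$ derivatives on $\Cb_0$ furnish the required initial control. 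The scalar field enters only through its energy-momentum tensor, and is handled by the perturbative hierarchy of \cite{Liu-Li18}, adapted so that the fluid source terms for the null structure and Bianchi equations are absorbed as lower-order perturbations of the naked singularity background.

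\emph{Step 3 (trapped surface formation and genericity).} Once Step 2 is closed, integrating the Hawking-mass equation along $C_{u_0}$ yields a gain in $2m$ essentially equal to $\int_0^1 r^2\Omega^{-2}(\partial_{\ub}\phi_t)^2\,\D\ub$, which is bounded below by Step 1. Propagating this mass gain into $\mathcal{D}$ and comparing with $r$ on a spherical section near $\Cb_0$ forces the trapping condition $\mu=\frac{2m}{r}>1$ for $t\ne 0$, producing a closed trapped surface preceding $\mathcal{O}$. Openness in $C^\alpha$ follows because the bootstrap of Step 2 and the mass lower bound are stable under $C^\alpha$-small perturbations of $\partial_{\ub}\phi$, so the set of trapped-surface-forming data contains an open neighborhood of any successful $\phi_t$, with zero data in its closure. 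The hardest part will be Step 2 near $\Cb_0$, where $\Omega\to 0$ and the usual double null energy arguments degenerate, while the fluid transport takes place along $U$, which is not aligned with either null direction; preventing focusing of the flow lines near $\mathcal{O}$ requires the uniform lower bound $|U_u|\gtrsim 1$ from the hypotheses. Designing weighted norms that are simultaneously compatible with the null structure equations and with the fluid transport, and closing the coupled bootstrap up to $\Cb_0$ under the degeneration $\Omega\to 0$, is where the genuinely new technical work lies compared with \cite{Liu-Li18, Li-Liu22}.
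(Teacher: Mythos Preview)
Your outline has the right global architecture (bootstrap a priori estimates, then a Raychaudhuri/mass argument), but Step~1 rests on a geometric misidentification that breaks the mechanism. You assert that on $C_{u_0}$, near the intersection with $\Cb_0$ (i.e.\ near $\ub=0$), the weight $\Omega^{-2}$ is large. It is not: $\Omega$ degenerates as $u\to 0^-$ along $\Cb_0$ (approaching the vertex $\mathcal{O}$), \emph{not} as $\ub\to 0^+$ along $C_{u_0}$; near $\ub=0$ on $C_{u_0}$ one has $\Omega\approx\Omega_0(u_0)$, a fixed positive number. Hence the integral $\int_0^1 r^2\Omega^{-2}(\partial_{\ub}\phi_t)^2\,\D\ub$ on $C_{u_0}$ cannot be kept bounded below while $\|r\partial_{\ub}\phi_t\|_{C^\alpha}\to 0$ by concentrating near $\ub=0$. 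The correct mechanism, and the one the paper uses, is: take $rL\phi_t\sim t\,\ub^\alpha$ on $C_{u_0}$ (smoothly cut off near $\ub=0$), use the wave equation $\Db(rL\phi)=-\Omega^2 h\,\Lb\phi$ to show that $rL\phi$ on $C_{u_1}$ differs from its value on $C_{u_0}$ by $O(\delta|u_1|^{-1})$, and only \emph{then} invoke $\Omega_0(u_1)^{-2}\gg 1$ in the Raychaudhuri equation $Dh\leqslant -r\Omega^{-2}(L\phi)^2$ on $C_{u_1}$ to force $h<0$. The parameters $\delta,u_1,t$ must be balanced (this is \eqref{deltau1} and \eqref{smallness3} in the paper) so that the a priori estimates hold on $[0,\delta]\times[u_0,u_1]$ while the amplified integral $\Omega_0^{-2}(u_1)|u_1|^{-1}\int_0^\delta|rL\phi|^2\,\D\ub$ beats $h_0\leqslant 1$.

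Step~2 also contains a substantive error and misses the key structural point. The fluid is \emph{not} determined by transport along the flow of $U$: the isothermal Euler system diagonalizes as transport of the Riemann invariants $\varrho^{2\sqrt{\kappa}/(1+\kappa)}U_{\Lb}^2$ and $\varrho^{2\sqrt{\kappa}/(1+\kappa)}U_{L'}^2$ along the \emph{acoustic} null directions $U_\pm$ of \eqref{acousticalnull} (see \eqref{euler1}--\eqref{euler2}). More importantly, the reason the fluid estimates close---and shocks do not form---is the observation you do not mention: since $U_\pm(\ub)\sim -\Omega^{-2}U_{\Lb}$, the integral curves $\gamma_\pm$ have proper length $\tau_\pm\lesssim \Omega_0^2(u)\,\ub$ in the region $\ub/|u|\ll 1$ (cf.\ \eqref{tau}). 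Thus solving Euler there is a genuinely \emph{local} problem, closed by Gr\"onwall along $\gamma_\pm$, while the scalar field/geometry is semi-global in $u$; this decoupling is the heart of the paper's argument. Finally, the problem is spherically symmetric, so there are no Bianchi equations or curvature components to estimate---everything is an ODE in $(\ub,u)$, and the $\chi,\chib,\eta$ you list reduce to the scalars $h,\hb$ with zero torsion.
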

\begin{remark}
The non-negativity of the mass $m\big|_{\Cb_0}\geqslant0$ on $\Cb_0$ holds true in naked singularity solutions, which admit a regular central line. In fact, in this double null setting, our proof also applies to the case when $\frac{2m}{r}\big|_{\Cb_0}$ is bounded below by a negative constant, in which case the solution cannot admit a regular central line. 
\end{remark}

\begin{center}
\begin{figure}
\includegraphics[width=2 in]{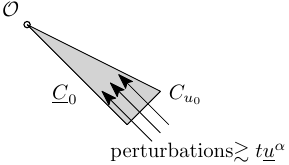}
\caption{\label{figure:perturbations}}
\end{figure}
\end{center}

The proof of this theorem contains the following two main ingredients: 

\begin{itemize}

\item The instability mechanism: This part is similar to the previous work \cite{Liu-Li18}. Roughly speaking, once we can prove that in the region $\frac{\ub}{|u|}\ll1$, all dimensionless variables of the system (such as expansions of null cones $\partial_{\ub}r,\partial_u r$, derivatives of scalar field $|u|\partial_{\ub}\phi, |u|\partial_u\phi$ and fluid variables $|u|^2\varrho, U_u$) are suitably bounded (in spherical symmetry, this can be done by method of characteristic), then by integrating the wave equation \eqref{DbLphi}, that is,
$$\partial_u(r\partial_{\ub}\phi)=-\partial_{\ub}r\partial_{u}\phi$$
 along $u$ direction, we expect that $r\partial_{\ub}\phi$ on $C_{u}$ for $u\in(u_0,0)$ is as large as its value on the initial null cone $C_{u_0}$. Then the Raychaudhuri equation along $C_u$, equation \eqref{Dh2} written in the following form
 $$\partial_{\ub}(\Omega^{-2}\partial_{\ub}r)\leqslant -r\Omega^{-2}(\partial_{\ub}\phi)^2,$$
tells us that the right hand side becomes sufficiently large when $u\to0$ because $\Omega\to0$. So $\partial_{\ub} r$ can become negative when integrating along $\ub$ when $|u|$ is sufficiently small. A slight difference from \cite{Liu-Li18} is that we don't have a criterion of trapped surface formation like that in \cite{Chr91}, which is crucial in Christodoulou's original proof \cite{Chr99}. Nevertheless, we can still prove trapped surface formation in a direct way.

\item The estimates for fluid variables: According to our setting (the perturbations are added exterior to the past null cone of $\mathcal{O}$), perturbations from fluid cannot produce trapped surface around the singularity since it has slower wave speed than light, then the past sound cone of points near $\mathcal{O}$ will not intersect $C_{u_0}$. On the other hand, since the background geometry changes a lot (trapped surfaces form due to perturbations of scalar field), we must carefully show that, in the region we are working in, the fluid variables still behave well (for example, shocks will not form) and obey the self-similar estimates which are needed in the previous ingredient. The main reason why this works is that solving the Euler equations in the region $\frac{\ub}{|u|}\ll1$ is a local problem (we will prove that both characteristic curves have length $\lesssim \Omega^2\ub$), while solving Einstein--scalar field equations is semi-global. 
\end{itemize}

We close this section by some final remarks.

\begin{remark}
Though the perturbations are measured in $C^{1,\alpha}$, we only consider smooth solutions to avoid details in establishing local--wellposedness of weak solution of the system \eqref{EinsteinSE}, which is possible because we don't need $\partial_{\ub}^2\phi$ to close the apriori estimates. Local wellposedness of smooth solutions of spherically symmetric Einstein--Euler system in double null setting was established in \cite{G-H-J23}. If passing to weak solutions, the perturbations looks like $r\partial_{\ub}\phi_t\gtrsim t\ub^{\alpha}$ when $\ub\to0^+$ and the apparent horizon emerging from the singularity $\mathcal{O}$ can be found, just as the original Einstein--scalar field system.  Smooth perturbations are obtained by cutting off.
\end{remark}

\begin{remark}
 In the current paper, the reason why we can establish $C^{1,\alpha}$ instability (which is better than all previous related works) is rooted in the quantitative behavior \eqref{OmegaCb0upperlower}, in which only the upper bound is needed. But it should be noted that the H\"older index depends on particular background naked singularity solution. Finding the optimal H\"older index is an interesting problem. See more detailed discussions in \cite{JS24}. 
\end{remark}
\begin{remark}
We expect that our method apply to most matter fields that have a slower sound speed than light, showing that the corresponding naked singularity solutions, when the singularities are strong, are unstable to trapped surface formation under gravitational perturbations. This will support a conjecture formulated by Newman \cite{Newman86} that naked singularities appearing in gravitational collapse are in some sense gravitationally weak. Note that the naked singularity solutions of spherically symmetric inhomogeneous dust collapse ($\kappa=0$) constructed by Christodoulou \cite{Chr84} has a weak naked singularity along radial past and future null geodesics \cite{Newman86cqg}. There exist some exceptional choices of initial configurations of spherically symmetric inhomogeneous dust collapse such that the resulting naked singularities are strong along past and future null geodesics (see for example \cite{S-J96}), in which case our argument can also apply. We can also consider the case of stiff fluid ($\kappa=1$), even though we don't have a precise example of naked singularity solution in this case. But in this case, the existence of the fluid part becomes semi-global and then it raises new difficulties.  Moreover, because the sound speed of stiff fluid is the same to the light speed, one may hope to show instability without external scalar field.
\end{remark}

\subsection*{Acknowledgement}
This work is supported by National Key R\&D Program of China (No. 2022YFA1005400), NSFC (12326602, 12141106).

\section{Double null coordinate and equations}

\subsection{Double null coordinate}

In a spherically symmetric spacetime,  let us introduce the double null coordinate $(\ub,u)$, where $\ub$, $u$ are optical functions, which means that their level sets $\Cb_{\ub}$ and $C_u$ are incoming and outgoing null cones respectively, invariant under the $SO(3)$ action representing spherical symmetry. We denote $S_{\ub,u}=C_u\cap\Cb_{\ub}$,  which is an orbit sphere. In the quotient spacetime, $\Cb_{\ub}$ and $C_u$ are then incoming and outgoing null lines respectively. We  denote
\begin{align*}
L=\frac{\partial}{\partial \ub},\ \Lb=\frac{\partial}{\partial u},
\end{align*}
to be the coordinate vector fields, and the lapse function $\Omega$ is defined by
\begin{align*}
-2\Omega^2=g(L,\Lb).
\end{align*}
The  metric then takes the following form
\begin{align*}
-2\Omega^2(\D\ub\otimes\D u+\D u\otimes\D\ub)+r^2\D\sigma_{\mathbb{S}^2}
\end{align*}
where the area radius function $r=r(\ub,u)$ is defined by
\begin{align*}
\text{Area}(S_{\ub,u})=4\pi r^2,
\end{align*}
and $\D\sigma_{\mathbb{S}^2}$ is the standard metric of the unit sphere.

\subsection{Equations} The estimates are done using the null structure equations. In spherical symmetric case,  the non-vanishing components of connections are 
\begin{align*}
h=\Omega^{-2}D r,\ \hb=\Db r, \omega=D\log\Omega, \omegab=\Db\log\Omega, 
\end{align*}
where $D$ and $\Db$ are the restrictions on the orbit spheres of the Lie derivatives along $L$ and $\Lb$. When applying on functions, $D$ and $\Db$ are simply the ordinary derivatives. Here $h$ and $\hb$ are the null expansions relative to the normalized pair of null vectors $\Omega^{-2}L$, $\Lb$. Then null structure equations are\footnote{These equations can be derived as in \cite{Chr08}, in which the equations without any symmetries in vacuum are obtained. One only needs to carefully bring back the Ricci tensor. Readers can also see \cite{Chr91} in which the notations are slightly different. }
\begin{align}
\label{Dh}Dh=&-\frac{1}{2}r\Omega^{-2}\mathbf{Ric}_{LL},\\
\label{Dbhb}\Db(\Omega^{-2}\hb)=&-\frac{1}{2}r\Omega^{-2}\mathbf{Ric}_{\Lb\Lb},\\
\label{DbhDhb}\Db(\Omega^2h)=D\hb=&-\frac{\Omega^2(1+h\hb)-\frac{1}{2}r^2\Omega^2\trs\mathbf{Ric}}{r},\\
\label{Dbomega}\Db\omega=D\omegab=&\frac{\Omega^2(1+h\hb)}{r^2}-\frac{1}{2}(\Omega^2\trs\mathbf{Ric}+\mathbf{Ric}_{L\Lb}).
\end{align}
Here $\trs\mathbf{Ric}$ represents the trace of $\mathbf{Ric}$ taken on the orbit spheres.

For the Einstein-scalar field-Euler system \eqref{EinsteinSE}, the energy momentum tensor reads
$$\mathbf{T}=\mathbf{T}_{\alpha\beta}^{sc}+\mathbf{T}_{\alpha\beta}^{fl}$$
where
$$\mathbf{T}_{\alpha\beta}^{sc}=\partial_\alpha\phi\partial_\beta\phi-\frac{1}{2}g_{\alpha\beta}g^{\mu\nu}\partial_\mu\phi\partial_\nu\phi,$$
and
$$\mathbf{T}_{\alpha\beta}^{fl}=(\varrho+p) U_\alpha U_\beta+pg_{\alpha\beta}=(1+\kappa)\varrho U_\alpha U_\beta+\kappa\varrho g_{\alpha\beta}.$$
From \eqref{EinsteinSE}, we then have
$$\mathbf{Ric}_{\alpha\beta}=2\partial_\alpha\phi\partial_\beta\phi+2(1+\kappa)\varrho U_{\alpha}U_{\beta}+(1-\kappa)\varrho g_{\alpha\beta}.$$
Plugging in the above expressions of the Ricci tensor, equations \eqref{Dh}-\eqref{Dbomega} become:
\begin{align}
\label{Dh2}Dh=&-r\Omega^{-2}(L\phi)^2-r\Omega^{-2}(1+\kappa)\varrho(U_L)^2,\\
\label{Dbhb2}\Db(\Omega^{-2}\hb)=&-r\Omega^{-2}(\Lb\phi)^2-r\Omega^{-2}(1+\kappa)\varrho(U_{\Lb})^2,\\
\label{DbhDhb2}\Db (\Omega^2 h)=D\hb=&-\frac{\Omega^2(1+h\hb-r^2(1-\kappa)\varrho)}{r},\\
\label{Dbomega2}\Db \omega=D\omegab=&\frac{\Omega^2(1+h\hb)}{r^2}-L\phi\Lb\phi-(1+\kappa)\varrho.
\end{align}

We turn to the equations satisfied the matter fields. The scalar field $\phi$ obeys the wave equation
$$\Box_g\phi=0,$$
and the fluid variables $(\varrho, U)$ obey the relativistic Euler equations
$$ U_\alpha\nabla^\alpha\varrho+(1+\kappa)\varrho\nabla^\alpha U_\alpha=0,$$
$$(1+\kappa)\varrho U_\alpha\nabla^\alpha U_\beta+\kappa\Pi_\beta^\alpha\nabla_\alpha\varrho=0,$$
with $U_\alpha U^\alpha=-1$, where $$\Pi_{\alpha\beta}=g_{\alpha\beta}+(1-\kappa)U_\alpha U_\beta$$ is the acoustical metric.

We should decompose the wave and Euler equations in double null coordinate. Let us introduce
\begin{align*}
L\phi=\frac{\partial}{\partial\ub}\phi,\ \Lb\phi=\frac{\partial}{\partial u}\phi,
\end{align*} 
then the wave equation can be written in the following form:
\begin{align}
\label{DbLphi}\Db(rL\phi)=&-\Omega^2h\Lb\phi,\\
\label{DLbphi}D(r\Lb\phi)=&-\hb L\phi.
\end{align}
These two equations are in fact the same equation.  For fluid variables, we denote
$$U_{\Lb}=g(U,\Lb), U_{L'}=g(U,L'),$$
where $L'=\Omega^{-2}L$ is such that $g(\Lb,L')=-2$. We then have
$$U=-\frac{1}{2}U_{L'} \Lb-\frac{1}{2}U_{\Lb}L'.$$
We have $U_{\Lb}<0, U_{L'}<0$ and $U_{\Lb}U_{L'}=1$ since $U$ is a future directed unit timelike vector.

We should write the relativistic Euler equations in the form allowing us to use the method of characteristics. 
\begin{lemma} A regular solution $(\varrho, U)$ of the relativistic Euler equations verify the following equations whenever $\varrho>0, U_{\Lb}, U_{L'}<0$:
\begin{align}\label{euler1}U_+\log\left(\varrho^{\frac{2\sqrt{\kappa}}{1+\kappa}} U_{\Lb}^2\right)=-2\omegab U_{L'}\sqrt{\frac{1-\sqrt{\kappa}}{1+\sqrt{\kappa}}}+\frac{2}{r}\sqrt{\frac{\kappa}{1-\kappa}}(U_{L'}\hb+U_{\Lb}h),\end{align}
	\begin{align}\label{euler2}U_-\log\left(\varrho^{\frac{2\sqrt{\kappa}}{1+\kappa}} U_{L'}^2\right)=2\omegab U_{L'}\sqrt{\frac{1+\sqrt{\kappa}}{1-\sqrt{\kappa}}}-\frac{2}{r}\sqrt{\frac{\kappa}{1-\kappa}}(U_{L'}\hb+U_{\Lb}h).\end{align}	where $U_+, U_-$ are the outgoing and incoming future directed acoustical-null vectors ($\Pi_{\alpha\beta}U_{\pm}^{\alpha}U_{\pm}^\beta=0$) of unit length:
	\begin{align}\label{acousticalnull}U_\pm=-\frac{1}{2}\sqrt{\frac{1\mp\sqrt{\kappa}}{1\pm\sqrt{\kappa}}} U_{L'}\Lb-\frac{1}{2}\sqrt{\frac{1\pm\sqrt{\kappa}}{1\mp\sqrt{\kappa}}} U_{\Lb}L'.\end{align}
\end{lemma}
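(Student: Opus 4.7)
\emph{Plan.} The statement is a classical Riemann invariant derivation for the relativistic isothermal Euler system, adapted to the double null frame $(L', \Lb)$ of the quotient spacetime. Under spherical symmetry the fluid is effectively 1+1 dimensional, so one expects two characteristic directions given by the acoustical null vectors $U_\pm$ and two Riemann invariants evolving only through lower-order source terms along them. My plan is a direct algebraic calculation in three stages: rewrite the two Euler equations in the null frame, project to obtain a $2\times 2$ principal system for $(\log\varrho,\log|U_{\Lb}|)$, and diagonalize.

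First, for the continuity equation $U^\alpha\nabla_\alpha\varrho+(1+\kappa)\varrho\nabla_\alpha U^\alpha=0$, I would use the volume form $2\Omega^2 r^2\,\D\ub\,\D u\,\D\sigma_{\mathbb{S}^2}$ and the decomposition $U=-\tfrac{1}{2}U_{\Lb}L'-\tfrac{1}{2}U_{L'}\Lb$, together with $Dr=\Omega^2 h$ and $\Db r=\hb$; the divergence formula then yields an explicit expression for $\nabla_\alpha U^\alpha$ in terms of $DU_{\Lb}$, $\Db U_{L'}$ and source terms in $\omegab, h, \hb$. Substituting produces a principal-order relation among $D\log\varrho$, $\Db\log\varrho$, $DU_{\Lb}$, $\Db U_{L'}$. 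Next, I would project the momentum equation $(1+\kappa)\varrho U^\alpha\nabla_\alpha U_\beta+\kappa\Pi_\beta^{\,\alpha}\nabla_\alpha\varrho=0$ onto $L'^\beta$ and $\Lb^\beta$ separately. The crucial feature here is that in the quotient spacetime the only non-vanishing connection coefficients are $\nabla_L L=2\omega L$ and $\nabla_{\Lb}\Lb=2\omegab\Lb$, so that $L'=\Omega^{-2}L$ is geodesic ($\nabla_{L'}L'=0$); hence the $\omega$-contributions cancel and only $\omegab$ survives after contraction with $U$, which is why the right-hand sides of \eqref{euler1}--\eqref{euler2} contain $\omegab$ but no $\omega$. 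Combined with the constraint $U_{\Lb}U_{L'}=1$ from $U_\alpha U^\alpha=-1$ (which gives $\Db U_{L'}=-U_{L'}^2\Db U_{\Lb}$ and collapses the two momentum projections into a single independent relation), continuity and momentum yield a closed $2\times 2$ principal system.

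Finally, I would diagonalize by taking the linear combinations $U_\pm$ from \eqref{acousticalnull}. The weights $\alpha=\sqrt{\tfrac{1-\sqrt{\kappa}}{1+\sqrt{\kappa}}}$ and $\beta=\sqrt{\tfrac{1+\sqrt{\kappa}}{1-\sqrt{\kappa}}}$ are the eigenvectors of the principal symbol (the characteristic speed being $\sqrt{\kappa}$), and the exponent $\tfrac{2\sqrt{\kappa}}{1+\kappa}$ on $\varrho$ in the Riemann invariant is forced by requiring the off-diagonal characteristic derivative to vanish. The algebraic identities $\alpha\beta=1$, $\alpha+\beta=2/\sqrt{1-\kappa}$ and $\beta-\alpha=2\sqrt{\kappa}/\sqrt{1-\kappa}$ then assemble the source terms from the previous stages into the form $-2\alpha\omegab U_{L'}+\tfrac{\beta-\alpha}{r}(U_{L'}\hb+U_{\Lb}h)$, which matches the right-hand side of \eqref{euler1}; the analogous calculation with $U_-$ in place of $U_+$ yields \eqref{euler2}.

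The main obstacle I expect is the bookkeeping of connection terms when projecting the covariant derivative onto the null frame: one must carefully distinguish the scalar derivative $U^\alpha\partial_\alpha U_{L'}$ from the tensorial projection $L'^\beta U^\alpha\nabla_\alpha U_\beta$ (they differ by $U^\alpha U_\beta\nabla_\alpha L'^\beta=\omegab U_{L'}^2$), and analogously for $\Lb$ (where the correction is $U^\alpha U_\beta\nabla_\alpha\Lb^\beta=-\omegab U_{L'}U_{\Lb}=-\omegab$ after using the normalization). Once these corrections are correctly identified and the constraint $U_{L'}U_{\Lb}=1$ is used systematically throughout, the remaining manipulation is purely linear algebra and produces exactly \eqref{euler1}--\eqref{euler2}.
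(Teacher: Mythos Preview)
Your proposal is correct and follows essentially the same route as the paper: decompose the Euler equations in the null frame $(L',\Lb)$, obtain a $2\times 2$ principal system for $(\log\varrho,\log|U_{\Lb}|)$, and diagonalize to extract the Riemann invariants along the acoustical null directions $U_\pm$. The only cosmetic difference is that the paper starts from the conservation form $\nabla^\alpha T^{fl}_{\alpha\beta}=0$ and computes its $L'$ and $\Lb$ components directly using the explicit connection formulas for the frame $(L',\Lb,e_A)$ (which packages the $\omegab$ and $h,\hb$ source terms in one step), whereas you begin from the continuity/momentum split and the volume form; the resulting $2\times 2$ system and the diagonalization producing the exponent $\tfrac{2\sqrt{\kappa}}{1+\kappa}$ are identical.
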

\begin{remark}
The crucial fact is that the right hand sides contain no derivatives of $(\varrho, U)$. The expressions \eqref{acousticalnull} of $U_{\pm}$ can be seen in the course of the proof, or directly obtained by the method of undetermined coefficients by $\Pi_{\alpha\beta}U_{\pm}^{\alpha}U_{\pm}^\beta=0$ and $U^\alpha_{\pm}U_{\pm\alpha}=-1$. \end{remark}
\begin{proof}
These equations can also be found in \cite{G-H-J23} written in different notations. For completeness, we outline the computations here. The Euler equations are simply the conservation law of the energy momentum tensor:
	$$\nabla^\alpha\mathbf{T}^{fl}_{\alpha\beta}=\nabla^\alpha((1+\kappa)\varrho U_\alpha U_\beta+\kappa\varrho g_{\alpha\beta})=0. $$
Using the frame $(L', \Lb, e^A)$ where $e^A, A=1,2$ are spherical, its $L', \Lb$ components are
\begin{equation}\label{euler0-}\begin{split}\nabla_{L'}\mathbf{T}^{fl}_{\Lb L'}+\nabla_{\Lb}\mathbf{T}^{fl}_{L' L'}-2\gs^{AB}\nabla_{A}\mathbf{T}^{fl}_{BL'}=0,\\
\nabla_{L'}\mathbf{T}^{fl}_{\Lb \Lb}+\nabla_{\Lb}\mathbf{T}^{fl}_{L' \Lb}-2\gs^{AB}\nabla_{A}\mathbf{T}^{fl}_{B\Lb}=0,\end{split}\end{equation}
where $\gs$ is the round metric of radius $r$ induced on spherical sections. By direct checking, we have the following formulas of connections:
$$\nabla_{L'}L'=0, \nabla_{L'}\Lb=0, \nabla_{\Lb}L'=-2\omegab L', \nabla_{\Lb}\Lb=2\omegab\Lb,$$
$$\nabla_{e_A}L'=\frac{h}{r}e_A, \nabla_{e_A}\Lb=\frac{\hb}{r}e_A,$$
$$\nabla_{e_A}e_B=\nablas_{e_A}e_B+\gs_{AB}\left(\frac{h}{2r}\Lb+\frac{\hb}{2r}L'\right),$$
and the only non-vanishing components of $\mathbf{T}^{fl}$:
$$\mathbf{T}^{fl}_{L'L'}=(1+\kappa)\varrho U_{L'}^2,\mathbf{T}^{fl}_{\Lb\Lb}=(1+\kappa)\varrho U_{\Lb}^2,$$
$$\mathbf{T}^{fl}_{L'\Lb}=(1-\kappa)\varrho, \mathbf{T}^{fl}_{AB}=\kappa \varrho \gs_{AB},$$
where we use $U_{\Lb}U_{L'}=1$ in evaluating $\mathbf{T}^{fl}_{L'\Lb'}$. Then \eqref{euler0-} becomes
	\begin{equation*}
	\begin{split}\frac{1-\kappa}{1+\kappa}L'\varrho+\Lb(\varrho U_{L'}^2)+4\omegab\varrho U_{L'}^2+\frac{2h}{r}\varrho+\frac{2\hb}{r}\varrho U_{L'}^2=0,\\
	L'(\varrho U_{\Lb}^2)+\frac{1-\kappa}{1+\kappa}\Lb\varrho+\frac{2\hb}{r}\varrho+\frac{2h}{r}\varrho U_{\Lb}^2=0.
	\end{split}
	\end{equation*}
	Dividing the first by $\varrho U_{L'}$ and the second by $\varrho U_{\Lb}$ we have (using again $U_{\Lb}U_{L'}=1$)
		\begin{equation}\label{euler0}
	\begin{split}\frac{1}{s-1}\frac{1-\kappa}{1+\kappa}U_{\Lb} L'\log\varrho^{s-1}+U_{L'}\Lb\log(\varrho U_{L'}^2)+4\omegab U_{L'}+\frac{2h}{r}U_{\Lb}+\frac{2\hb}{r} U_{L'}=0,\\
	U_{\Lb}L'\log(\varrho U_{\Lb}^2)+\frac{1}{-s-1}\frac{1-\kappa}{1+\kappa}U_{L'}\Lb\log\varrho^{-s-1}+\frac{2\hb}{r}U_{L'}+\frac{2h}{r}U_{\Lb}=0.
	\end{split}
	\end{equation}
	where $s$ is to be determined. We then multiply the second equation by $\frac{1}{s-1}\frac{1-\kappa}{1+\kappa}$ and then compare the coefficient of the second term, that is,
	\begin{equation}\label{equationofs}\frac{1}{s-1}\frac{1}{-s-1}\left(\frac{1-\kappa}{1+\kappa}\right)^2=1.\end{equation}
	Then we can sum up two equations to obtain equations for $\varrho^s U_{\Lb}^2$ and $\varrho^s U_{L'}^2$. The positive root of \eqref{equationofs} is simply $s=\frac{2\sqrt{\kappa}}{1+\kappa}$, as indicated in \eqref{euler1}, \eqref{euler2}. The negative root of \eqref{equationofs} leads to the same equations. 
\end{proof}

\section{A priori estimates}

We begin to prove Theorem \ref{main}. In this section, we are going to derive a priori estimates.  Recall that we consider double null characteristic problem of Einstein--scalar field-Euler system \eqref{EinsteinSE} in spherical symmetry. The incoming cone $\Cb_0$  is endowed with the parameter $u=-r$, where $u$ is one of the optical functions in the spacetime we consider, and also $\phi\equiv0$ on $\Cb_0$. We denote
$$\Omega_0(u)=\Omega(0,u), h_0(u)=h(0,u)$$
to be the restriction of $\Omega$ and $h$ on $\Cb_0$. Note that since $\Omega_0$ is decreasing when $u\to0$ by \eqref{Cb0Omega} and energy condition, by rescaling $\ub$, we can set $\Omega_0(u_0)\leqslant 1$ and then $\Omega_0(u)\leqslant 1$ for $u\in [u_0,0)$. Note also that we have $h_0\leqslant 1$, which is equivalent to the positivity of the Hawking mass $m\big|_{\Cb_0}=\frac{r}{2}(1-h_0)$ (on $\Cb_0$ we have $\hb=-1$) and $h_0\geqslant 0$ since $\Cb_0$ has no closed trapped surfaces on it. 
In the following, we use
$$A\lesssim B$$
to denote $A\leqslant cB$ for some constant $c>0$, depending only on $\kappa$.
\begin{theorem}\label{apriori1}

There exists a constant $C>0$ depending on $\kappa$ such that the following statement is true. Let $a,b\geqslant 1$ be such that 
$$\sup_{\Cb_0\cup C_{u_0}}\big\{|u|^2\varrho, (|u|^2\varrho)^{-1}\big\}+\sup_{C_{u_0}}\{|rL\phi|, |r\omega|\}\leqslant a, $$
$$\sup_{\Cb_0\cup C_{u_0}}\big\{|U_{\Lb}|, |U_{L'}|\big\}\leqslant b.$$
 Then for any $\delta>0, u_1\in (u_0,0)$ such that
\begin{align}\label{smallness1}C\delta|u_1|^{-1}a^{2}b^{\frac{(1+\sqrt{\kappa})^2}{\sqrt{\kappa}}}\leqslant 1,\end{align}
the following estimates holds for $0\leqslant\ub\leqslant\delta, u_0\leqslant u\leqslant u_1<0$, 
\begin{align}\label{geometry}\frac{1}{2}\Omega_0(u)\leqslant\Omega\leqslant 2\Omega_0(u),  \frac{1}{2}|u|\leqslant r\leqslant 2|u|,\end{align}
\begin{align}
\label{h-h0}|h-h_0|\lesssim& \delta|u|^{-1}\Omega_0^{-2} a^2b^{\frac{(1+\sqrt{\kappa})^2}{\sqrt{\kappa}}},\\
\label{hb+1}|\hb+1|\lesssim& \delta|u|^{-1}ab^{\frac{1+\kappa}{\sqrt{\kappa}}},\\
\label{omega}|u||\omega|\lesssim& ab^{\frac{1+\kappa}{\sqrt{\kappa}}},\\
\label{omegab}|u||\omegab|\lesssim& ab^2,
\end{align}

\begin{align}\label{Lphi}|rL\phi|\lesssim& a,\\
\label{Lbphi}|r\Lb\phi|\lesssim& \delta|u|^{-1}a,\\
\label{varrho}|u|^2|\varrho|\lesssim& ab^\frac{1+\kappa}{\sqrt{\kappa}},\\
\label{U}|U_{\Lb}|, |U_{L'}|\lesssim& a^\frac{\sqrt{\kappa}}{1+\kappa}b,
\end{align}
\end{theorem}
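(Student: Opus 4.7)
The plan is a standard continuity (bootstrap) argument. Let $\ub^\star \in (0,\delta]$ be the supremum of $\ub_0$ such that \eqref{geometry}--\eqref{U} hold on $\mathcal{R}_{\ub_0} := \{0 \le \ub \le \ub_0,\ u_0 \le u \le u_1\}$ with the implicit constants multiplied by, say, $4$. The task is to improve each bootstrap bound by a factor of $2$ under \eqref{smallness1} and then conclude $\ub^\star = \delta$ by continuity.

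I would first close the pure geometry. Integrating $\partial_{\ub} r = \Omega^2 h$ in $\ub$ from $\Cb_0$ (where $r = |u|$), and $D\log\Omega = \omega$ from $\Omega|_{\Cb_0} = \Omega_0(u)$, gives \eqref{geometry} once $h$ and $\omega$ are controlled. For $h - h_0$, I integrate \eqref{Dh2} in $\ub$ from $\Cb_0$; the scalar and fluid source terms, bounded using the bootstrap on $L\phi$, $U_L$ and $\varrho$, produce the factor $|u|^{-1}\Omega_0^{-2}a^2 b^{(1+\sqrt\kappa)^2/\sqrt\kappa}$. For $\hb + 1$, I integrate \eqref{DbhDhb2} in $\ub$ from $\hb|_{\Cb_0} = -1$. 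For $\omega$, I integrate $\Db\omega$ in \eqref{Dbomega2} in $u$ starting from $C_{u_0}$, where $|u\omega|$ is part of the initial data. For $\omegab$, I integrate $D\omegab$ in \eqref{Dbomega2} in $\ub$ from $\Cb_0$; equation \eqref{Cb0Omega} together with $\Lb\phi|_{\Cb_0}=0$ and the assumed bounds on $\varrho, U_{\Lb}$ give $|\omegab| \lesssim |u|^{-1} ab^2$ on $\Cb_0$, which then propagates.

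For the scalar field, since $\Lb$ is tangent to $\Cb_0$ and $\phi \equiv 0$ there, one has $r\Lb\phi = 0$ on $\Cb_0$, so integrating \eqref{DLbphi} in $\ub$ from $\ub=0$ immediately yields \eqref{Lbphi} once $\hb$ and $L\phi$ are controlled by the bootstrap. For $L\phi$, I integrate \eqref{DbLphi} in $u$ from $C_{u_0}$, where $|rL\phi| \le a$ by hypothesis; the source $-\Omega^2 h\Lb\phi$ is small in $L^1_u$ because $\Lb\phi$ itself is already of order $\delta/|u|^2$, so \eqref{Lphi} is recovered.

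The main obstacle is the fluid sector. The equations \eqref{euler1}--\eqref{euler2} are transport equations for $\log\bigl(\varrho^{2\sqrt\kappa/(1+\kappa)} U_{\Lb}^2\bigr)$ and $\log\bigl(\varrho^{2\sqrt\kappa/(1+\kappa)} U_{L'}^2\bigr)$ along the acoustical-null vectors $U_\pm$, which are $g$-timelike because $\sqrt\kappa<1$. I would show that the backward $U_\pm$-characteristics through any point of $\mathcal{R}_{\ub^\star}$ remain inside the region and exit across $\Cb_0 \cup C_{u_0}$, where the data are bounded in terms of $a,b$. A key quantitative point is that, in view of \eqref{acousticalnull} and the bootstrap, $\D\ub$ is comparable to $\Omega_0^2\,\D u$ along these characteristics (the $\Lb$-component of $U_\pm$ dominates up to bounded factors involving $\sqrt\kappa$), so each characteristic has $\ub$-length at most $\delta$. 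Integrating the right-hand sides of \eqref{euler1}--\eqref{euler2}, which involve $\omegab U_{L'}$ and $(U_{L'}\hb + U_{\Lb}h)/r$, and inserting the already-improved bounds on $\omegab, \hb, h$, one sees the integrated logarithm is of order $\delta|u|^{-1} a\,b^{\mathrm{power}}$. The factor $\sqrt{(1\pm\sqrt\kappa)/(1\mp\sqrt\kappa)}$ multiplying $\omegab U_{L'}$ is responsible, upon exponentiation, for the exponent $(1+\sqrt\kappa)^2/\sqrt\kappa$ of $b$ in \eqref{smallness1}, and that smallness condition forces the resulting multiplicative factor to be $\le 2$, recovering \eqref{varrho} and \eqref{U}. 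Collecting all the improvements, continuity yields $\ub^\star = \delta$.
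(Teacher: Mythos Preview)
Your overall architecture is exactly the paper's: bootstrap in $\ub$, integrate \eqref{Dh2}--\eqref{Dbomega2} and \eqref{DbLphi}--\eqref{DLbphi} in the indicated directions, and transport the Riemann invariants $\varrho^{2\sqrt\kappa/(1+\kappa)}U_{\Lb}^2$, $\varrho^{2\sqrt\kappa/(1+\kappa)}U_{L'}^2$ along $U_\pm$. The scalar and geometric parts are fine.

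The fluid paragraph, however, has the key quantitative fact inverted. From \eqref{acousticalnull}, $U_\pm(\ub)=-\tfrac12\sqrt{\tfrac{1\pm\sqrt\kappa}{1\mp\sqrt\kappa}}\,\Omega^{-2}U_{\Lb}$ while $U_\pm(u)=-\tfrac12\sqrt{\tfrac{1\mp\sqrt\kappa}{1\pm\sqrt\kappa}}\,U_{L'}$, so it is the $L'$ (i.e.\ $\partial_{\ub}$) component that dominates, and the correct comparison is $\D u\sim\Omega_0^{2}\,\D\ub$, not the reverse. Equivalently, the proper-time length of each backward characteristic satisfies $\tau_\pm(\ub,u)\lesssim \Omega_0^{2}(u)\,\ub\cdot a^{\sqrt\kappa/(1+\kappa)}b$. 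This $\Omega_0^{2}$ factor is what makes the integration close: the most dangerous source in $R_\pm$ is $r^{-1}U_{\Lb}h$ with $|h|\lesssim\Omega_0^{-2}$ (from the $h$-bootstrap), and the $\Omega_0^{2}$ in $\tau_\pm$ precisely cancels it to give $\int|R_\pm|\,\D\tau\lesssim\delta|u|^{-1}\cdot(\text{powers of }a,b)$. With the comparison written your way one would instead get a factor $\Omega_0^{-2}$ left over, and the argument would not close. Relatedly, you must verify (the paper does this via a short sub-bootstrap) that $|u|$ and $\Omega_0(u)$ vary by at most a fixed factor along each $\gamma_\pm$; otherwise you cannot freeze them when integrating.

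Finally, the exponent $(1+\sqrt\kappa)^2/\sqrt\kappa$ on $b$ does not come from the constant $\sqrt{(1\pm\sqrt\kappa)/(1\mp\sqrt\kappa)}$ in front of $\omegab U_{L'}$. It arises when you recover $\varrho$ from the product of the two Riemann invariants, picking up $b^{(1+\kappa)/\sqrt\kappa}$ on $|u|^2\varrho$, and then combine with $(U_L)^2\lesssim \Omega_0^4(a^{\sqrt\kappa/(1+\kappa)}b)^2$ in the source of \eqref{Dh2}; $(1+\kappa)/\sqrt\kappa+2=(1+\sqrt\kappa)^2/\sqrt\kappa$.
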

\begin{proof}
It suffices to prove the estimates under the bootstrap assumptions
\begin{align}
\label{h-h0b}|h-h_0|\lesssim& C^{\varepsilon}\delta|u|^{-1}\Omega_0^{-2} a^2b^{\frac{(1+\sqrt{\kappa})^2}{\sqrt{\kappa}}},\\
\label{hb+1b}|\hb+1|\lesssim& C^{\varepsilon}\delta|u|^{-1}ab^{\frac{1+\kappa}{\sqrt{\kappa}}},\\
\label{omega-b}|u||\omega|\lesssim& C^{\varepsilon}ab^{\frac{1+\kappa}{\sqrt{\kappa}}},\\
\label{omegabb}|u||\omegab|\lesssim& C^{\varepsilon}ab^{2},
\end{align}

\begin{align}
\label{Lbphib}|r\Lb\phi|\lesssim& C^{\varepsilon}\delta|u|^{-1}a,\\
\label{Ub}|U_{\Lb}|, |U_{L'}|\lesssim& C^{\varepsilon} a^\frac{\sqrt{\kappa}}{1+\kappa}b,
\end{align}
where $\varepsilon>0$ is small ($\varepsilon=\frac{1}{4}$ is sufficient) and $C\geqslant 1$ is sufficiently large and to be determined.

From the equation $D\log\Omega=\omega$, using  \eqref{smallness1} and the bound \eqref{omega-b}, we have
$$|\log\Omega-\log\Omega_0|\leqslant \int_0^\delta |\omega|\D\ub\lesssim C^\varepsilon\delta|u|^{-1}ab^{\frac{1+\kappa}{\sqrt{\kappa}}}\leqslant C^{\varepsilon-1}.$$
If $C$ is sufficiently large, $|\log\Omega-\log\Omega_0|\leqslant \log 2$ and the estimate \eqref{geometry} for $\Omega$ follows.  Here we use the fact that $b\geqslant1$ and $(1+\sqrt{\kappa})^2\geqslant 1+\kappa$.

To estimate $r$, note that if $C$ is sufficiently large,
\begin{align}\label{h}|\Omega^2 h|\lesssim\Omega_0^2(|h_0|+C^{\varepsilon}\delta|u|^{-1}\Omega_0^{-2} a^2b^{\frac{(1+\sqrt{\kappa})^2}{\sqrt{\kappa}}})\lesssim 1,\end{align}
where we have used \eqref{smallness1}, \eqref{geometry} for $\Omega$ and $\Omega_0, |h_0|\leqslant 1$. Then from $Dr=\Omega^2h$, we have
$$|r-|u||\leqslant\int_0^\delta|\Omega^2h|\D\ub\lesssim \delta \leqslant C^{-1}|u|.$$
Then if $C$ is sufficiently large, the estimate \eqref{geometry} for $r$ holds.

Next we estimate the matter fields. From the wave equation $\Db(rL\phi)=-\Omega^2 h\Lb\phi$, using \eqref{smallness1}, \eqref{Lbphib} and \eqref{h},
\begin{align*}|rL\phi|\leqslant&|rL\phi\big|_{C_{u_0}}|+\int_{u_0}^u |\Omega^2h||\Lb\phi|\D u'\\
\lesssim &a+C^\varepsilon a\int_{u_0}^u\delta|u'|^{-2}a\D u'\leqslant (1+C^{\varepsilon-1}) a\lesssim a,\end{align*}
and from the other version of the wave equation $D(r\Lb\phi)=-\hb L\phi$, using the above estimate, 
$$|r\Lb\phi|\leqslant \int_0^\delta |\hb L\phi|\D\ub\lesssim\delta |u|^{-1}a,$$
where we have used 
\begin{align}\label{hb}|\hb|\leqslant |\hb+1|+1\lesssim C^\varepsilon\delta|u|^{-1}ab^{\frac{1+\kappa}{\sqrt{\kappa}}} +1\lesssim 1. \end{align}
This proves estimates \eqref{Lphi} and \eqref{Lbphi}.

Next we turn to the estimates for the fluid variables. For each $(\ub,u)\in [0,\delta]\times[u_0,u_1]$, let $\tau\mapsto\gamma_\pm(\tau;\ub,u)$ be the integral curve of $U_{\pm}$ passing through $(\ub,u)$, with $\gamma_\pm(0;\ub,u)\in C_{u_0}\cup \Cb_0$ (The existence of such a point can be argued as follows: For an inextendible timelike curve with unit tangent, it must intersect the past boundary of a global hyperbolic spacetime). Note that for different $(\ub,u)$ (those on the same integral curve), $\gamma_{\pm}(\tau; \ub,u)$ can be the same. We denote $\tau_{\pm}(\ub,u)$ be the parameter of $\gamma_{\pm}(\tau;\ub,u)$ at $(\ub,u)$. 

In order to estimate the fluid variables, we need to estimate the length of $\gamma_{\pm}$ in the spacetime region that we consider, and the variation of $|u|$ and $\Omega_0(u)$ along $\gamma_{\pm}$. We begin by assuming that, for any $(\ub,u)$,
\begin{equation}\label{ucompare}0\leqslant \frac{|u(\gamma_{\pm}(0;\ub,u))|}{|u|}-1\leqslant (ab^2)^{-1}\log 2,\end{equation}
which is an estimate of the variation of the function $|u|$ along the integral curves $\gamma_{\pm}$ of $U_{\pm}$.
The the equation \eqref{Dbhb2} restricted on $\Cb_0$ (where $u=-r$ and hence $\hb=-1$) reads
\begin{equation}\label{DbhbCb0}\frac{\partial}{\partial u}\log\Omega_0=-\frac{1}{2}|u|(1+\kappa)\varrho(U_{\Lb})^2.\end{equation}
Particularly, we can see again this implies that $\Omega_0$ is monotonically decreasing along $\Cb_0$. On $\Cb_0$ we have $\varrho(U_{\Lb})^2\leqslant |u|^{-2}ab^2$, then by integrating it from $u(\gamma_{\pm}(0;\ub,u))$ to $u$, we have
\begin{align*}\log\frac{\Omega_0(u(\gamma_{\pm}(0;\ub,u)))}{\Omega_0(u)}=&\int_{u(\gamma_{\pm}(0;\ub,u))}^{u}\frac{1}{2}|u'|(1+\kappa)\varrho(U_{\Lb})^2\big|_{\Cb_0}\D u'\\\leqslant& ab^2 \log\frac{|u(\gamma_{\pm}(0;\ub,u))|}{|u|}\\
\leqslant& ab^2\left(\frac{|u(\gamma_{\pm}(0;\ub,u))|}{|u|}-1\right)\\
\leqslant&\log 2,\end{align*}
where the last inequality is by \eqref{ucompare}. This implies that, for $u_0\leqslant u\leqslant u_1$, 
\begin{align*}1\leqslant \frac{\Omega_0(u(\gamma_{\pm}(0;\ub,u)))}{\Omega_0(u)}\leqslant 2.\end{align*}
By the monotonicity of $\Omega_0$,  the $\gamma_{\pm}(0;\ub,u)$ above can be replaced by $\gamma_{\pm}(\tau;\ub,u)$ for any $\tau\in[0,\tau_{\pm}(\ub,u)]$, then using \eqref{geometry} we have
\begin{align}\label{Omegaalonggamma}\frac{1}{2}\leqslant \frac{\Omega(\gamma_{\pm}(\tau;\ub,u))}{\Omega_0(u)}\leqslant 4,\end{align}
for $0\leqslant \tau\leqslant \tau_{\pm}(\ub,u)$. This is the variation of $\Omega$ along $\gamma_{\pm}$.

Then we are going to estimate $\tau_{\pm}(\ub,u)$, representing the length of $\gamma_{\pm}$.
Note that
$$U_{\pm}(\ub)=-\frac{1}{2}\sqrt{\frac{1\pm\sqrt{\kappa}}{1\mp\sqrt{\kappa}}}\Omega^{-2}U_{\Lb},$$
 by integrating the right hand side along $\gamma_{\pm}$ from $0$ to $\tau_{\pm}(\ub,u)$, we have
\begin{align*}\ub=&\int_0^{\tau_{\pm}(\ub,u)}-\frac{1}{2}\sqrt{\frac{1\pm\sqrt{\kappa}}{1\mp\sqrt{\kappa}}}\Omega^{-2}U_{\Lb}\Big|_{\gamma_{\pm}(\tau;\ub,u)}\D \tau'\\
\gtrsim& C^{-\varepsilon}\Omega_0^{-2}(u)(a^\frac{\sqrt{\kappa}}{1+\kappa}b)^{-1}\tau_\pm(\ub,u),
\end{align*}
where we have used \eqref{Ub} and  and \eqref{Omegaalonggamma}. This implies that 
\begin{align}\label{taub}\tau_\pm(\ub,u)\lesssim  C^{\varepsilon}\Omega_0^2(u)\ub a^\frac{\sqrt{\kappa}}{1+\kappa}b.\end{align}
Now we are going to retrieve the estimate \eqref{ucompare}. From 
$$U_\pm(u)=-\frac{1}{2}\sqrt{\frac{1\mp\sqrt{\kappa}}{1\pm\sqrt{\kappa}}} U_{L'},$$
 the difference of the values of $u$ at the starting point and ending point of $\gamma_\pm$ is estimated by
\begin{align*}|u-u(\gamma_{\pm}(0;\ub,u))|=&\int_0^{\tau_\pm(\ub,u)}-\frac{1}{2}\sqrt{\frac{1\mp\sqrt{\kappa}}{1\pm\sqrt{\kappa}}} U_{L'}\Big|_{\gamma_{\pm}(\tau;\ub,u)}\D\tau'\\\lesssim &C^{2\varepsilon} \Omega_0^2(u)\ub (a^\frac{\sqrt{\kappa}}{1+\kappa}b)^2 \\\leqslant &C^{2\varepsilon-1}|u|(ab^2)^{-1},\end{align*}
where we have used \eqref{Ub} and \eqref{taub} in the first inequality and \eqref{smallness1} in the second inequality (Note that $a,b\geqslant1$, $\frac{2\sqrt{\kappa}}{1+\kappa}\leqslant1$ and $\frac{(1+\sqrt{\kappa})^2}{\sqrt{\kappa}}\geqslant 4$). We also use $\Omega_0\leqslant 1$. If $C$ is chosen sufficiently large, we then have, for $u_0\leqslant u\leqslant u_1$,
$$|u-u(\gamma_{\pm}(0;\ub,u))|\leqslant |u| (ab^2)^{-1}\frac{1}{2}\log 2$$
which improves the estimate \eqref{ucompare}. This implies that \eqref{ucompare} is in fact true for each $(\ub,u)$ with $\log 2$ replaced by $\frac{1}{2}\log2$ and hence the subsequent estimates are also true.  Moreover, we have
\begin{align}\label{ucompare2}1\leqslant \frac{|u(\gamma_{\pm}(\tau;\ub,u))|}{|u|}\leqslant 2,\end{align}
for $0\leqslant\tau\leqslant\tau_{\pm}(\ub,u)$. The estimates \eqref{Omegaalonggamma} and \eqref{ucompare2} together tells us that when integrating along $\gamma_{\pm}$, we can simply replace $|u(\gamma_{\pm}(\tau;\ub,u)|$ and $\Omega(\gamma_{\pm}(\tau;\ub,u))$ in the integrants by their values at the end points $(\ub,u)$.

We introduce an additional bootstrap assumption
\begin{align}\label{varrhoinverseb}(|u|^2|\varrho|)^{-1}\lesssim C^{\varepsilon}ab^\frac{1+\kappa}{\sqrt{\kappa}},\end{align}
for $0\leqslant \ub\leqslant \delta$, $u_0\leqslant u\leqslant u_1<0$, which guarantees that $\varrho>0$ throughout the spacetime region we consider. Also, since $U_{\Lb}, U_{L'}<0$, we can refer to the Euler equations \eqref{euler1}, \eqref{euler2}, denoting the right hand sides by $R_{\pm}$:
\begin{align}\label{euler12}U_+\log\left(\varrho^{\frac{2\sqrt{\kappa}}{1+\kappa}} U_{\Lb}^2\right)=-2\omegab U_{L'}\sqrt{\frac{1-\sqrt{\kappa}}{1+\sqrt{\kappa}}}+\frac{2}{r}\sqrt{\frac{\kappa}{1-\kappa}}(U_{L'}\hb+U_{\Lb}h)\triangleq R_+,\end{align}
	\begin{align}\label{euler22}U_-\log\left(\varrho^{\frac{2\sqrt{\kappa}}{1+\kappa}} U_{L'}^2\right)=2\omegab U_{L'}\sqrt{\frac{1+\sqrt{\kappa}}{1-\sqrt{\kappa}}}-\frac{2}{r}\sqrt{\frac{\kappa}{1-\kappa}}(U_{L'}\hb+U_{\Lb}h)\triangleq R_-.\end{align}
	Using \eqref{omegabb}, \eqref{Ub} and \eqref{geometry}, \eqref{h}, \eqref{hb}, the right hand sides of the above two equations are bounded by 
	\begin{align*}|R_+|,|R_-|\lesssim& C^{2\varepsilon} |u|^{-1}ab^2\cdot a^{\frac{\sqrt{\kappa}}{1+\kappa}}b+C^{\varepsilon}|u|^{-1}a^{\frac{\sqrt{\kappa}}{1+\kappa}}b\cdot\Omega_0^{-2}(u),\\
	\lesssim& C^{2\varepsilon} \Omega_0^{-2}(u)|u|^{-1}ab^2\cdot a^{\frac{\sqrt{\kappa}}{1+\kappa}}b.\end{align*}
	Using  \eqref{taub} and \eqref{smallness1}, we find
	$$\int_0^{\tau_{\pm}(\ub,u)}|R_{\pm}|\D\tau'\lesssim  C^{2\varepsilon} \Omega_0^{-2}(u)|u|^{-1}ab^2\cdot a^{\frac{\sqrt{\kappa}}{1+\kappa}}b\cdot C^{\varepsilon}\Omega_0^2(u)\ub a^\frac{\sqrt{\kappa}}{1+\kappa}b\leqslant C^{3\varepsilon-1}.$$
	By choosing $C$ sufficiently large, we have
	$$\int_0^{\tau_{\pm}(\ub,u)}|R_{\pm}|\D\tau'\leqslant \log2.$$
By integrating the Euler equations, we have
	$$\left|\log\left(\frac{\varrho^{\frac{2\sqrt{\kappa}}{1+\kappa}} U_{\Lb}^2(\ub,u)}{\varrho^{\frac{2\sqrt{\kappa}}{1+\kappa}} U_{\Lb}^2(\gamma_{+}(0;\ub,u))}\right)\right|\leqslant \int_0^{\tau_+(\ub,u)}|R_+|\D\tau'\leqslant \log 2$$
	and
	$$\left|\log\left(\frac{\varrho^{\frac{2\sqrt{\kappa}}{1+\kappa}} U_{L'}^2(\ub,u)}{\varrho^{\frac{2\sqrt{\kappa}}{1+\kappa}} U_{L'}^2(\gamma_{-}(0;\ub,u))}\right)\right|\leqslant \int_0^{\tau_-(\ub,u)}|R_-|\D\tau'\leqslant \log 2,$$
which implies that
$$\frac{1}{2}\varrho^{\frac{2\sqrt{\kappa}}{1+\kappa}} U_{\Lb}^2(\gamma_{+}(0;\ub,u))\leqslant\varrho^{\frac{2\sqrt{\kappa}}{1+\kappa}} U_{\Lb}^2(\ub,u)\leqslant 2\varrho^{\frac{2\sqrt{\kappa}}{1+\kappa}} U_{\Lb}^2(\gamma_{+}(0;\ub,u))$$
and
$$\frac{1}{2}\varrho^{\frac{2\sqrt{\kappa}}{1+\kappa}} U_{L'}^2(\gamma_{-}(0;\ub,u))\leqslant\varrho^{\frac{2\sqrt{\kappa}}{1+\kappa}} U_{L'}^2(\ub,u)\leqslant 2\varrho^{\frac{2\sqrt{\kappa}}{1+\kappa}} U_{L'}^2(\gamma_{-}(0;\ub,u)).$$
By these two relations, we have
\begin{align*}\varrho=&\left(\varrho^{\frac{2\sqrt{\kappa}}{1+\kappa}} U_{\Lb}^2\cdot \varrho^{\frac{2\sqrt{\kappa}}{1+\kappa}} U_{L'}^2\right)^{\frac{1+\kappa}{4\sqrt{\kappa}}}\\
\lesssim&\left(2\varrho^{\frac{2\sqrt{\kappa}}{1+\kappa}} U_{\Lb}^2(\gamma_{+}(0;\ub,u))\cdot 2\varrho^{\frac{2\sqrt{\kappa}}{1+\kappa}} U_{L'}^2(\gamma_{-}(0;\ub,u))\right)^{\frac{1+\kappa}{4\sqrt{\kappa}}}\\
\lesssim&\left( (|u(\gamma_{+}(0;\ub,u))|^{-2}a)^{\frac{2\sqrt{\kappa}}{1+\kappa}}b^2\cdot (|u(\gamma_{-}(0;\ub,u))|^{-2}a)^{\frac{2\sqrt{\kappa}}{1+\kappa}}b^2\right)^{\frac{1+\kappa}{4\sqrt{\kappa}}}\\
\leqslant& |u|^{-2}a\cdot b^\frac{1+\kappa}{\sqrt{\kappa}}\end{align*}
where we have used the initial bound of $\varrho,U_{\Lb}, U_{L'}$ and \eqref{ucompare2}. The same procedure gives the same bound for $\varrho^{-1}$:
$$\varrho^{-1}=\left(\varrho^{\frac{2\sqrt{\kappa}}{1+\kappa}} U_{\Lb}^2\cdot \varrho^{\frac{2\sqrt{\kappa}}{1+\kappa}} U_{L'}^2\right)^{-\frac{1+\kappa}{4\sqrt{\kappa}}}\lesssim |u|^{2}a\cdot b^\frac{1+\kappa}{\sqrt{\kappa}}$$
which improves \eqref{varrhoinverseb}. This implies that the bound \eqref{varrho} is true without assuming \eqref{varrhoinverseb}.
Similarly, we have
$$|U_{\Lb}|=\left(\varrho^{\frac{2\sqrt{\kappa}}{1+\kappa}} U_{\Lb}^2\Big/ \varrho^{\frac{2\sqrt{\kappa}}{1+\kappa}} U_{L'}^2\right)^{\frac{1}{4}}\lesssim a^{\frac{\sqrt{\kappa}}{1+\kappa}}\cdot b$$
and the same for $|U_{L'}|$. So we  have proved \eqref{U}. In particular, we can drop the factor $C^{\varepsilon}$
in \eqref{taub},  by using the bound \eqref{U} and go through the estimates there again. That is, it holds that
 \begin{align}\label{tau}\tau_\pm(\ub,u)\lesssim  \Omega_0^2(u)\ub a^\frac{\sqrt{\kappa}}{1+\kappa}b,\end{align}
which is an estimate of the length of $\gamma_{\pm}$ inside the spacetime region we consider.

Finally we consider the connection coefficients. From the equation \eqref{Dh2}, and the bounds \eqref{geometry}, \eqref{Lphi}, \eqref{varrho}, \eqref{U}, we have
\begin{align*}|h-h_0|\lesssim& \int_0^\delta\left|-r\Omega^{-2}(L\phi)^2-r\Omega^{-2}(1+\kappa)\varrho(U_L)^2\right|\D\ub\\\lesssim&\delta\Omega_0^{-2}|u|^{-1}a^2+\delta\Omega_0^2|u|^{-1}a\cdot b^\frac{1+\kappa}{\sqrt{\kappa}}(a^{\frac{\sqrt{\kappa}}{1+\kappa}}\cdot b)^2\\
\lesssim&\delta\Omega_0^{-2}|u|^{-1}a^2b^{\frac{(1+\sqrt{\kappa})^2}{\sqrt{\kappa}}}. \end{align*}
This proves \eqref{h-h0}. From the equation \eqref{DbhDhb2}, and the bounds \eqref{geometry}, \eqref{h}, \eqref{hb} and \eqref{varrho}, we have
$$|\hb+1|\lesssim\int_0^\delta\left|-\frac{\Omega^2(1+h\hb-r^2(1-\kappa)\varrho)}{r}\right|\D\ub\lesssim\delta|u|^{-1}ab^{\frac{1+\kappa}{\sqrt{\kappa}}}.$$
This proves \eqref{hb+1}. By \eqref{geometry}, \eqref{h}, \eqref{hb}, \eqref{Lphi}, \eqref{Lbphi} and \eqref{varrho}, the right hand side of equation \eqref{Dbomega2} is estimated by
\begin{align*}
&\left|\frac{\Omega^2(1+h\hb)}{r^2}-L\phi\Lb\phi-(1+\kappa)\varrho)\right|\\
\lesssim &|u|^{-2}+\delta|u|^{-3}a^2+|u|^{-2}ab^{\frac{1+\kappa}{\sqrt{\kappa}}}\lesssim |u|^{-2}ab^{\frac{1+\kappa}{\sqrt{\kappa}}},
\end{align*}
where in the last inequality we use \eqref{smallness1}. Then by the initial bound of $r\omega$ on $C_{u_0}$, the estimate \eqref{omega} of $\omega$  follows by integrating above bound relative to $u$. For $\omegab$, we will need its initial bound on $\Cb_0$:
$$|\omegab|_{\Cb_0}|=\left|\frac{\partial}{\partial u}\log\Omega_0\right|=\left|\frac{1}{2}|u|(1+\kappa)\varrho(U_{\Lb})^2\right|\lesssim |u|^{-1}ab^2,$$
where we use the equation \eqref{DbhbCb0} on $\Cb_0$. Then  by \eqref{Dbomega2}, we have
\begin{align*}|\omegab|\lesssim& |\omegab|_{\Cb_0}|+\int_0^{\delta}\left|\frac{\Omega^2(1+h\hb)}{r^2}-L\phi\Lb\phi-\frac{1}{2}(1+\kappa)\varrho\right|\D\ub\\
\lesssim&|u|^{-1}ab^2+\delta|u|^{-2}ab^{\frac{1+\kappa}{\sqrt{\kappa}}}\lesssim |u|^{-1}ab^2
\end{align*}
where we use \eqref{smallness1}. This proves \eqref{omegab}.
\end{proof}

We have closed the estimates of the geometric quantities $h,\hb,\omega,\omegab$, derivatives of the scalar function $L\phi,\Lb\phi$ and the fluid variables $\varrho, U$ in the region $(\ub,u)\in[0,\delta]\times[u_0,u_1]$ for $\delta, u_1$ satisfying \eqref{smallness1}. To construct a regular solution, we also need to consider the first derivatives of the fluid variables.

\begin{theorem}\label{apriori2}
There exists a constant $C>0$ depending on $\kappa$ (which may be larger than that in previous theorem) such that the following statement is true. Let $a,b\geqslant 1$ be such that there hold the same assumptions of Theorem \ref{apriori1}. Let $A\geqslant 1$ be such that
$$ \sup_{\Cb_0\cup C_{u_0}}\{ |u||\Lb\log\varrho|, |u||\Lb\log (-U_{\Lb'})|\}\leqslant A.$$
Then if in addition to \eqref{smallness1}, it holds
\begin{align}\label{smallness2}C\delta|u_1|^{-1}a^{\frac{\sqrt{\kappa}}{1+\kappa}}b^2A\leqslant 1,\end{align}
we will have
\begin{align}\label{Dbfluid}|u||\Lb\log\varrho|, |u||\Lb\log (-U_{\Lb'})|=|u| |\Lb\log (-U_{L})| \lesssim A,\end{align}
for $0\leqslant\ub\leqslant\delta, u_0\leqslant u\leqslant u_1<0.$
\end{theorem}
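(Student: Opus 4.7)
The plan is to bound $\Lb X_\pm$, where $X_+ = \log\!\bigl(\varrho^{\frac{2\sqrt{\kappa}}{1+\kappa}} U_\Lb^2\bigr)$ and $X_- = \log\!\bigl(\varrho^{\frac{2\sqrt{\kappa}}{1+\kappa}} U_{L'}^2\bigr)$ are the Riemann invariants satisfying the transport equations \eqref{euler12}, \eqref{euler22}. Because $U_\Lb U_{L'}=1$, one has $\log\varrho = \tfrac{1+\kappa}{4\sqrt{\kappa}}(X_+ + X_-)$ and $\log(-U_{L'}) = \tfrac{1}{4}(X_- - X_+)$, so \eqref{Dbfluid} reduces to proving $|u||\Lb X_\pm|\lesssim A$. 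We set up a bootstrap $|u||\Lb X_\pm|\leq C^\varepsilon A$ with $\varepsilon=\frac14$ on $[0,\delta]\times[u_0,u_1]$, accompanied by auxiliary bootstraps for $\Lb h$, $\Lb\hb$, $\Lb\omegab$, $\Lb L\phi$, $\Lb\Lb\phi$, with the goal of improving every constant by using the smallness \eqref{smallness2}.

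Since $[\Lb,\Lb]=0$, commuting $\Lb$ through $U_\pm X_\pm = R_\pm$ yields
\begin{equation*}
U_\pm(\Lb X_\pm) = \Lb R_\pm - [\Lb, U_\pm] X_\pm,
\end{equation*}
a pure transport equation for $\Lb X_\pm$ along the acoustical null curves $\gamma_\pm$ of Theorem \ref{apriori1}. By \eqref{acousticalnull} the commutator $[\Lb,U_\pm]$ is a vector field whose coefficients are built from $\Lb U_\Lb$, $\Lb U_{L'}$ and $\omegab$ (the latter entering through $[\Lb,L']=-2\omegab L'$), applied to $LX_\pm$ or $\Lb X_\pm$; the bootstrap controls the self-referential $\Lb U_\cdot$ factors, Theorem \ref{apriori1} controls everything else, and $LX_\pm$ itself is expressible from the original Euler equations. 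The genuinely new input is $\Lb R_\pm$, which requires bounds on $\Lb\omegab$, $\Lb h$, $\Lb\hb$ (note $\Lb r = \hb$ is already known from \eqref{hb+1}) in addition to $\Lb U_\Lb$, $\Lb U_{L'}$.

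These auxiliary second-order quantities are controlled by commuting $\Lb$ through the null-structure equations \eqref{Dh2}, \eqref{Dbhb2}, \eqref{Dbomega2}: since $[L,\Lb]=0$, the result is a clean $L$-transport equation for each of $\Lb h$, $\Lb(\Omega^{-2}\hb)$, $\Lb\omegab$. The initial value $\Lb\omegab|_{\Cb_0}$ is obtained by $u$-differentiating \eqref{DbhbCb0}, producing terms linear in $\Lb\log\varrho|_{\Cb_0}$ and $\Lb\log(-U_\Lb)|_{\Cb_0}$, both hypothesized to be $\leq A/|u|$; initial values of $\Lb h, \Lb\hb$ on $\Cb_0$ are handled similarly (the scalar contribution vanishes since $\phi\equiv 0$ there). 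The source terms of the commuted null-structure equations involve $\Lb L\phi$ and $\Lb\Lb\phi$; these we control by commuting $\Lb$ through the wave equations \eqref{DbLphi}, \eqref{DLbphi}, which again gives pure $L$-transport equations for $r\Lb L\phi$ and $r\Lb\Lb\phi$ that close using \eqref{Lphi}, \eqref{Lbphi}. Integrating each commuted equation in $\ub\in[0,\delta]$ produces an $\ub$-small correction of size $\delta|u|^{-1}\cdot\mathrm{poly}(a,b,A)$ to the initial value.

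Finally, integrating $U_\pm(\Lb X_\pm)=\mathrm{source}_\pm$ along $\gamma_\pm$ from $\Cb_0\cup C_{u_0}$ and using the length bound \eqref{tau} together with the comparability estimates \eqref{ucompare2}, \eqref{Omegaalonggamma} yields, at each $(\ub,u)$,
\begin{equation*}
|\Lb X_\pm(\ub,u)| \;\lesssim\; \frac{A}{|u|} \;+\; \tau_\pm(\ub,u)\cdot \Omega_0^{-2}(u)\,|u|^{-2}\cdot\mathrm{poly}(a,b,A),
\end{equation*}
where the $\Omega_0^{-2}$ from the geometric sources is exactly cancelled by the $\Omega_0^2$ inside $\tau_\pm$, leaving $\delta|u|^{-1}$ times a polynomial in $a,b,A$ that is dominated by $C^{-1}\cdot A/|u|$ thanks to \eqref{smallness2}. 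This improves the bootstrap constant from $C^\varepsilon$ to something strictly smaller, closing the argument. The main obstacle is the bookkeeping cascade: bounding $\Lb X_\pm$ forces a simultaneous bootstrap on $\Lb h$, $\Lb\hb$, $\Lb\omegab$, $\Lb L\phi$, $\Lb\Lb\phi$, each with its own initial data on $\Cb_0\cup C_{u_0}$, and one must carefully verify that the polynomial in $a$, $b$, $A$ arising in the combined source bound fits within the single smallness hypothesis \eqref{smallness2}.
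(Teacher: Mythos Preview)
Your overall strategy is sound—commute $\Lb$ with the Euler transport equations, bootstrap the fluid derivatives, and close via the length bound \eqref{tau}—but the paper takes a shortcut that avoids almost all of your ``bookkeeping cascade.'' Rather than commuting the original equations \eqref{euler12}, \eqref{euler22}, whose right-hand sides $R_\pm$ contain $\omegab$, the paper first rewrites the Euler equations in the conjugate form
\[
U_\pm\log\bigl(\varrho^{\frac{2\sqrt{\kappa}}{1+\kappa}}U_{\Lb'}^2\bigr)=Q_+,\qquad
U_\mp\log\bigl(\varrho^{\frac{2\sqrt{\kappa}}{1+\kappa}}U_{L}^2\bigr)=Q_-,
\]
where $Q_\pm$ involve $\omega$ instead of $\omegab$ (this is forced by passing from $U_\Lb,U_{L'}$ to $U_{\Lb'},U_L$, since $U_{\Lb'}=\Omega^{-2}U_\Lb$). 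The point is that after grouping terms as $\Omega^{-2}\omega U_\Lb=\omega U_{\Lb'}$, $U_\Lb h=U_{\Lb'}\cdot\Omega^2 h$, and $U_{L'}\hb=U_L\cdot\Omega^{-2}\hb$, the quantity $\Lb Q_\pm$ needs only $\Lb\omega$, $\Lb(\Omega^2 h)$, and $\Lb(\Omega^{-2}\hb)$—and these are \emph{precisely} the left-hand sides of the null structure equations \eqref{Dbomega2}, \eqref{DbhDhb2}, \eqref{Dbhb2}, hence directly bounded by the zeroth-order estimates of Theorem~\ref{apriori1} with no commutation at all. The paper's bootstrap therefore involves only the two fluid quantities $\Lb\log\varrho$ and $\Lb\log(-U_{\Lb'})$.

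Your route, by contrast, needs $\Lb\omegab$, for which there is no direct equation; you must commute $\Lb$ through $D\omegab=\cdots$, whose right-hand side then drags in $\Lb h$ and $\Lb\Lb\phi$, each requiring its own commuted transport equation. This can be made to close (the system stays linear in the top-order terms, and the initial data on $\Cb_0$ are computable from the hypotheses), but the resulting polynomial in $a,b$ governing the smallness condition is unlikely to match the specific exponent $a^{\frac{\sqrt{\kappa}}{1+\kappa}}b^2A$ in \eqref{smallness2}; you would prove a version of the theorem with a possibly stronger smallness requirement. The paper's conjugation trick buys exactly the elimination of this cascade and the clean form of \eqref{smallness2}. (One small correction: $\Lb L\phi$ is already directly given by \eqref{DbLphi} and needs no bootstrap; only $\Lb\Lb\phi$ genuinely requires a commutation in your scheme.)
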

\begin{remark}
Here $\Lb'=\Omega^{-2}\Lb$ is geodesic. Recalling $L=\Omega^2 L'$, from the relation $U_{L}U_{\Lb'}=U_{L'}U_{\Lb}=1$, in fact we have $\log (-U_{\Lb'})=-\log (-U_{L})$. We estimate $\Db$ derivative of $\log(-U_{\Lb'})$ instead of $\log(-U_{\Lb})$ simply because estimating the latter one will involve an estimate of $\Lb\omegab$, which we want to avoid.
\end{remark}
\begin{remark}
By \eqref{acousticalnull}, the Euler equations \eqref{euler12}, \eqref{euler22}, and the estimate \eqref{Dbfluid}, we know that $D$ derivatives of $\log\varrho, \log(-U_{\Lb'}), \log(-U_{L})$ are also bounded. Then this implies that there are no shock singularities in the spacetime region we consider.
\end{remark}
\begin{proof}
We will commute $\Lb$ with the Euler equations.  To avoid estimating $\Lb\omegab$, we rewrite the Euler equation as
	\begin{align}\label{euler13}U_+\log\left(\varrho^{\frac{2\sqrt{\kappa}}{1+\kappa}} U_{\Lb'}^2\right)=2\Omega^{-2}\omega U_{\Lb}\sqrt{\frac{1+\sqrt{\kappa}}{1-\sqrt{\kappa}}}-\frac{2}{r}\sqrt{\frac{\kappa}{1-\kappa}}(U_{L'}\hb+U_{\Lb}h)\triangleq Q_+,\end{align}
	\begin{align}\label{euler23}U_-\log\left(\varrho^{\frac{2\sqrt{\kappa}}{1+\kappa}} U_{L}^2\right)=-2\Omega^{-2}\omega U_{\Lb}\sqrt{\frac{1-\sqrt{\kappa}}{1+\sqrt{\kappa}}}+\frac{2}{r}\sqrt{\frac{\kappa}{1-\kappa}}(U_{L'}\hb+U_{\Lb}h)\triangleq Q_-.\end{align}
 The left hand sides are about $U_{\Lb'}$ and $U_{L}$ instead of $U_{\Lb}$ and $U_{L'}$, and $\omega$ instead of $\omegab$ appears at the right hand sides. A simple way to derive these equations is taking the ``conjugates'' of \eqref{euler12} and \eqref{euler22}, that is, changing ``$+$'' to ``$-$'' and vice versa, ``$\Lb$'' to ``$L$'' and ``$L'$'' to ``$\Lb'$''.  One can also use directly the expressions of $U_{\pm}$ in \eqref{acousticalnull}, the original Euler equations \eqref{euler12}, \eqref{euler22} and write $\varrho^{\frac{2\sqrt{\kappa}}{1+\kappa}} U_{\Lb'}^2=\Omega^{-4}\varrho^{\frac{2\sqrt{\kappa}}{1+\kappa}} U_{\Lb}^2$ and $\varrho^{\frac{2\sqrt{\kappa}}{1+\kappa}} U_{L}^2=\Omega^4\varrho^{\frac{2\sqrt{\kappa}}{1+\kappa}} U_{L'}^2$.

Commuting $\Lb$, we have
	\begin{align}\label{Dbeuler1}U_+\left(\Lb\log\left(\varrho^{\frac{2\sqrt{\kappa}}{1+\kappa}} U_{\Lb'}^2\right)\right)=\Lb Q_++[U_+,\Lb]\left(\log\left(\varrho^{\frac{2\sqrt{\kappa}}{1+\kappa}} U_{\Lb'}^2\right)\right),\end{align}
	\begin{align}\label{Dbeuler2}U_-\left(\Lb\log\left(\varrho^{\frac{2\sqrt{\kappa}}{1+\kappa}} U_{L}^2\right)\right)=\Lb Q_-+[U_-,\Lb]\left(\log\left(\varrho^{\frac{2\sqrt{\kappa}}{1+\kappa}} U_{L}^2\right)\right).\end{align}
By direct computation, recalling \eqref{acousticalnull} and the relation $[L,\Lb]=0$, we have
\begin{align}\label{commutator}
[U_{\pm},\Lb]=\frac{1}{2}\sqrt{\frac{1\mp\sqrt{\kappa}}{1\pm\sqrt{\kappa}}} \Lb(U_{L'})\Lb+\frac{1}{2}\sqrt{\frac{1\pm\sqrt{\kappa}}{1\mp\sqrt{\kappa}}} \Lb(U_{\Lb'})L.
\end{align}
Using \eqref{acousticalnull} again and $U_{\Lb}U_{L'}=U_{\Lb'}U_{L}=1$, we have
\begin{align*}\frac{1}{2}\sqrt{\frac{1\pm\sqrt{\kappa}}{1\mp\sqrt{\kappa}}} \Lb(U_{\Lb'})L=&-\Lb(U_{\Lb'})\cdot\Omega^2\left(U_{L'}U_{\pm}+\frac{1}{2}\sqrt{\frac{1\mp\sqrt{\kappa}}{1\pm\sqrt{\kappa}}}(U_{L'})^2\Lb\right)\\
=&-\Lb\log(-U_{\Lb'})\left(U_{\pm}+\frac{1}{2}\sqrt{\frac{1\mp\sqrt{\kappa}}{1\pm\sqrt{\kappa}}}U_{L'}\Lb\right)\end{align*}
Plugging in the equations \eqref{euler13}, \eqref{euler23}, the equations \eqref{Dbeuler1}, \eqref{Dbeuler2} become
	\begin{align}\label{Dbeuler12}U_+\left(\Lb\log\left(\varrho^{\frac{2\sqrt{\kappa}}{1+\kappa}} U_{\Lb'}^2\right)\right)=P_+\Lb\log\left(\varrho^{\frac{2\sqrt{\kappa}}{1+\kappa}} U_{\Lb'}^2\right)+\Lb Q_+-(\Lb\log (-U_{\Lb'}) )Q_+,\end{align}
	\begin{align}\label{Dbeuler22}U_-\left(\Lb\log\left(\varrho^{\frac{2\sqrt{\kappa}}{1+\kappa}} U_{L}^2\right)\right)=P_-\Lb\log\left(\varrho^{\frac{2\sqrt{\kappa}}{1+\kappa}} U_{L}^2\right)+\Lb Q_--(\Lb\log(-U_{\Lb'}) )Q_-,\end{align}
where $Q_{\pm}$ are defined in \eqref{euler13}, \eqref{euler23} and
\begin{align*}P_\pm=&\frac{1}{2}\sqrt{\frac{1\mp\sqrt{\kappa}}{1\pm\sqrt{\kappa}}}\left(\Lb(U_{L'})-(\Lb\log(-U_{\Lb'}))U_{L'}\right)\\
=&-\sqrt{\frac{1\mp\sqrt{\kappa}}{1\pm\sqrt{\kappa}}}\left(\Lb\log(-U_{\Lb'})+\omegab)\right)U_{L'}.
\end{align*}

We only need to prove \eqref{Dbfluid} under the bootstrap assumptions:
\begin{align}\label{Dbfluidb}|u||\Lb\log\rho|, |u||\Lb\log (-U_{\Lb'})|\lesssim C^\varepsilon A.\end{align}
Using the estimates derived in Theorem \ref{apriori1}, and \eqref{h}, \eqref{hb}, we know
\begin{align*}|Q_{\pm}|\lesssim& |u|^{-1}\Omega_0^{-2}a^{\frac{\sqrt{\kappa}}{1+\kappa}+1}b^{\frac{1+\kappa}{\sqrt{\kappa}}+1}+|u|^{-1}\Omega_0^{-2}a^{\frac{\sqrt{\kappa}}{1+\kappa}}b\\\lesssim& |u|^{-1}\Omega_0^{-2}a^{\frac{\sqrt{\kappa}}{1+\kappa}+1}b^{\frac{1+\kappa}{\sqrt{\kappa}}+1}.\end{align*}
And then by \eqref{omegab}, \eqref{U} and \eqref{Dbfluidb}, we have
\begin{align}\label{Qpm}|(\Lb\log (-U_{\Lb'}) )Q_\pm|\lesssim C^{\varepsilon}|u|^{-2}\Omega_0^{-2}a^{\frac{\sqrt{\kappa}}{1+\kappa}+1}b^{\frac{1+\kappa}{\sqrt{\kappa}}+1}A,\end{align}
and 
\begin{align}\label{Ppm}|P_{\pm}|\lesssim C^{\varepsilon}|u|^{-1}a^{\frac{\sqrt{\kappa}}{1+\kappa}}b A+|u|^{-1}a^{\frac{\sqrt{\kappa}}{1+\kappa}}b\cdot ab^2.\end{align}
By equations  \eqref{DbhDhb2} and \eqref{Dbomega2}, and using \eqref{smallness1}, \eqref{h}, \eqref{hb} and the estimates derived in Theorem \ref{apriori1}, we have
$$|\Lb (\Omega^2h)|\lesssim |u|^{-1}a b^{\frac{1+\kappa}{\sqrt{\kappa}}},$$
$$|\Lb\omega|\lesssim |u|^{-2}a b^{\frac{1+\kappa}{\sqrt{\kappa}}},$$
furthermore, by \eqref{Dbhb2} and the relation $U_{\Lb}U_{L'}=1$, we have
$$|U_{L}\Lb(\Omega^{-2}\hb)|\lesssim |u|^{-1}a^{\frac{\sqrt{\kappa}}{1+\kappa}+2}b^{\frac{1+\kappa}{\sqrt{\kappa}}+1},$$
and by \eqref{Dbfluidb}, we have
$$|\Lb(U_{\Lb'})|=|\Omega^{-2}U_{\Lb}||\Lb\log (-U_{\Lb'})|\lesssim C^\varepsilon |u|^{-1}\Omega_0^{-2} a^{\frac{\sqrt{\kappa}}{1+\kappa}}bA,$$
$$|\Lb(U_{L})|=|\Omega^2U_{L'}||\Lb\log (-U_{L})|\lesssim C^\varepsilon|u|^{-1}\Omega_0^2a^{\frac{\sqrt{\kappa}}{1+\kappa}}bA.$$
By combining these estimates, we have
\begin{align}\label{LbQpm}|\Lb Q_{\pm}|\lesssim C^\varepsilon |u|^{-2}\Omega_0^{-2}a^{\frac{\sqrt{\kappa}}{1+\kappa}+1} b^{\frac{1+\kappa}{\sqrt{\kappa}}+1}A+ |u|^{-2}\Omega_0^{-2}a^{\frac{\sqrt{\kappa}}{1+\kappa}+2} b^{\frac{1+\kappa}{\sqrt{\kappa}}+1}.\end{align}

Using \eqref{tau}, the estimate \eqref{Ppm} integrating along $\gamma_{\pm}$ gives
$$\int_{0}^{\tau_{\pm}(\ub,u)}|P_{\pm}|\D\tau'\lesssim C^{\varepsilon}\ub |u|^{-1}a^{\frac{2\sqrt{\kappa}}{1+\kappa}}b^2 A+\ub|u|^{-1}a^{\frac{2\sqrt{\kappa}}{1+\kappa}+1}b^4\lesssim C^{\varepsilon-1}.$$
where we have used \eqref{smallness2} and \eqref{smallness1} in the last inequality. By choosing $C$ sufficiently large, we will have
\begin{align}\label{intPpm} \int_{0}^{\tau_{\pm}(\ub,u)}|P_{\pm}|\D\tau'\leqslant \log2.\end{align}
By integrating the estimates \eqref{Qpm} and \eqref{LbQpm}  along $\gamma_{\pm}$, we have
\begin{align}\nonumber&\int_{0}^{\tau_{\pm}(\ub,u)}|\Lb Q_\pm-(\Lb\log(-U_{\Lb'}) )Q_\pm|\D\tau'\\
\nonumber\lesssim& C^{\varepsilon} \ub|u|^{-2}a^{\frac{2\sqrt{\kappa}}{1+\kappa}+1}b^{\frac{(1+\sqrt{\kappa})^2}{\sqrt{\kappa}}} A+\ub|u|^{-2}a^{\frac{2\sqrt{\kappa}}{1+\kappa}+1}b^{\frac{(1+\sqrt{\kappa})^2}{\sqrt{\kappa}}}\\ 
\label{intLbQpm}\lesssim &C^{\varepsilon-1}|u|^{-1}A\end{align}
where we have used \eqref{smallness1} in the last inequality. \eqref{smallness2} is not needed here because the inhomogeneous term $\Lb Q_--(\Lb\log(-U_{\Lb'}) )Q_-$ is linear to the first order derivative.

Using \eqref{ucompare}, \eqref{intPpm} and \eqref{intLbQpm}, by Gronwall's inequality along $\gamma_\pm$, we have
\begin{align*}&\left|\Lb\log\left(\varrho^{\frac{2\sqrt{\kappa}}{1+\kappa}} U_{\Lb'}^2\right)\right|\\
\leqslant&2\left|\Lb\log\left(\varrho^{\frac{2\sqrt{\kappa}}{1+\kappa}} U_{\Lb'}^2\right)\right|_{\Cb_0\cup C_{u_0}}+2\int_{0}^{\tau_{+}(\ub,u)}|\Lb Q_+-(\Lb\log(-U_{\Lb'}) )Q_+|\D\tau'\\
\lesssim& |u|^{-1}A\end{align*}
and similarly
$$\left|\Lb\log\left(\varrho^{\frac{2\sqrt{\kappa}}{1+\kappa}} U_{L}^2\right)\right|\lesssim |u|^{-1}A.$$
So
$$|u||\Lb\log\rho|\lesssim|u| \left|\Lb\log\left(\varrho^{\frac{2\sqrt{\kappa}}{1+\kappa}} U_{\Lb'}^2\right)+\Lb\log\left(\varrho^{\frac{2\sqrt{\kappa}}{1+\kappa}} U_{L}^2\right)\right|\lesssim A,$$
and
$$|u||\Lb\log U_{\Lb'}|\lesssim |u|\left|\Lb\log\left(\varrho^{\frac{2\sqrt{\kappa}}{1+\kappa}} U_{\Lb'}^2\right)-\Lb\log\left(\varrho^{\frac{2\sqrt{\kappa}}{1+\kappa}} U_{L}^2\right)\right|\lesssim A,$$
which completes the proof.
\end{proof}

The bounds for higher order derivatives of $h,\hb,\omega,\omegab, L\phi, \Lb\phi$ and $\varrho, U$ can be estimated by commuting more $L$ and $\Lb$ to the equations. The resulting equations are linear relative to the higher order derivatives, so the estimates can be derived directly in terms of the corresponding bounds on $\Cb_0\cup C_{u_0}$, in the region $(\ub,u)\in[0,\delta]\times[u_0,u_1]$ for $\delta,u_1$ satisfies \eqref{smallness1} and \eqref{smallness2}. The higher order bounds need not to have a self-similar form as those in Theorem \ref{apriori1} and \ref{apriori2}.  Then a regular solution of the system \eqref{EinsteinSE} in double null coordinates in this region can be constructed by standard argument (see for example \cite{G-H-J23} without scalar field).

\section{Formation of trapped surfaces and instability}

In the last section we will find the trapped surface formed in the solution by requiring that  the initial scalar field on $C_{u_0}$ satisfies certain lower bound condition.

\begin{theorem}\label{fots}There is a universal constant $c_1>0$ such that the following is true. Let $f(u)$ be a positive function defined on $u\in [u_0,0)$. Suppose that there is a $u_1\in (u_0,0)$ such that,
\begin{align}\label{Omega0<=f}\Omega_0(u_1)\leqslant f(u_1),\end{align}
and it holds
\begin{align}\label{smallness3}C(f(u_1))^{\frac{2}{3}}a^2b^{\frac{(1+\sqrt{\kappa})^2}{\sqrt{\kappa}}}A\leqslant 1,\end{align}
\begin{align}\label{lowerbound}\int_0^{\delta}|rL\phi\big|_{C_{u_0}}|^2\D\ub\geqslant c_1f^2(u_1)|u_1|,\end{align}
where $C$ is determined in Theorem \ref{apriori2} and 
\begin{align}\label{deltau1}\delta a^{\frac{2}{3}}=(f(u_1))^{\frac{2}{3}}|u_1|.\end{align}
Then the solution of the system \eqref{EinsteinSE} exists in $(\ub,u)\in[0,\delta]\times[u_0,u_1]$ and $S_{\delta,u_1}$ is a closed trapped surface.
\end{theorem}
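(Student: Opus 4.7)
The plan is to invoke the a priori estimates of Theorems~\ref{apriori1} and~\ref{apriori2} on the rectangle $(\ub,u)\in[0,\delta]\times[u_0,u_1]$, then integrate the outgoing Raychaudhuri equation \eqref{Dh2} along $\ub$ at $u=u_1$ and show that the scalar-field term alone is large enough to drive $h(\delta,u_1)$ negative; together with $\hb(\delta,u_1)<0$, which is immediate from \eqref{hb+1} under the smallness \eqref{smallness1}, this gives that both null expansions of $S_{\delta,u_1}$ are negative, hence it is trapped. First I would verify that the choice $\delta a^{2/3}=f(u_1)^{2/3}|u_1|$ from \eqref{deltau1}, combined with \eqref{smallness3}, implies both \eqref{smallness1} and \eqref{smallness2}: this is a direct book-keeping of powers using $A,b\geqslant 1$, $\sqrt{\kappa}/(1+\kappa)<2/3$, and $(1+\sqrt\kappa)^2/\sqrt\kappa\geqslant 2$. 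Consequently all estimates of Theorems~\ref{apriori1}--\ref{apriori2} are at our disposal.

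The main step is to transport the $L^2$ lower bound \eqref{lowerbound} from $C_{u_0}$ to $C_{u_1}$. Using the wave equation \eqref{DbLphi} and integrating in $u$, together with $|\Omega^2 h|\lesssim 1$ (from \eqref{geometry} and \eqref{h-h0}) and $|r\Lb\phi|\lesssim\delta|u|^{-1}a$ from \eqref{Lbphi}, I obtain
\begin{equation*}
\bigl|rL\phi(\ub,u_1)-rL\phi(\ub,u_0)\bigr|\leqslant\int_{u_0}^{u_1}|\Omega^2 h\,\Lb\phi|\,\D u\lesssim\delta a\,|u_1|^{-1}.
\end{equation*}
Squaring and integrating over $\ub\in[0,\delta]$ gives an $L^2$ oscillation of size $\lesssim\delta^{3/2}a|u_1|^{-1}$; with the choice \eqref{deltau1} this equals $f(u_1)|u_1|^{1/2}$, matching the scale of the initial lower bound $c_1^{1/2}f(u_1)|u_1|^{1/2}$. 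Minkowski's inequality, provided the universal constant $c_1$ is chosen large enough, then yields
\begin{equation*}
\int_0^\delta (rL\phi)^2(\ub,u_1)\,\D\ub\geqslant \tfrac{1}{4}c_1 f(u_1)^2|u_1|.
\end{equation*}

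With this in hand, integrating \eqref{Dh2} in $\ub$ at $u=u_1$ and dropping the non-positive fluid contribution, I rewrite the scalar term as $r^{-1}\Omega^{-2}(rL\phi)^2$. The hypothesis $\Omega_0(u_1)\leqslant f(u_1)$ together with the bound $\Omega\leqslant 2\Omega_0$ from \eqref{geometry} gives $r^{-1}\Omega^{-2}\gtrsim |u_1|^{-1}f(u_1)^{-2}$ throughout the slab, so
\begin{equation*}
h(\delta,u_1)\leqslant h_0(u_1)-\int_0^\delta r\Omega^{-2}(L\phi)^2\,\D\ub\leqslant 1-c'\, c_1
\end{equation*}
for some universal $c'>0$. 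Taking $c_1$ sufficiently large forces $h(\delta,u_1)<0$, and combined with $\hb(\delta,u_1)<0$ from \eqref{hb+1}, the sphere $S_{\delta,u_1}$ is trapped.

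The hard part is the delicate scaling in the second step: $\delta$ must be small enough for the a priori estimates to close and for the oscillation of $rL\phi$ along incoming cones to stay under control, yet large enough that the $L^2$ mass of $rL\phi$ on $C_{u_0}$, weighted by the large factor $\Omega_0(u_1)^{-2}\geqslant f(u_1)^{-2}$ extracted via $\Omega\to 0$ on $\Cb_0$, can overwhelm the barrier $h_0(u_1)\leqslant 1$ in the Raychaudhuri integration. The precise identity $\delta a^{2/3}=f(u_1)^{2/3}|u_1|$ is exactly the scaling at which these competing requirements are put on the same footing, and is what ultimately dictates the H\"older index of the perturbation allowed in Theorem~\ref{main}.
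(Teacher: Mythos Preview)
Your proposal is correct and follows essentially the same route as the paper: verify that \eqref{deltau1} and \eqref{smallness3} imply \eqref{smallness1}--\eqref{smallness2}, transport the $L^2$ lower bound of $rL\phi$ from $C_{u_0}$ to $C_{u_1}$ via the wave equation \eqref{DbLphi}, then integrate the Raychaudhuri equation \eqref{Dh2} at $u=u_1$ using $\Omega_0(u_1)\leqslant f(u_1)$ to force $h(\delta,u_1)<0$. The only cosmetic differences are that the paper uses the pointwise inequality $|X|^2\geqslant\tfrac12|Y|^2-|X-Y|^2$ before integrating in $\ub$ (where you use Minkowski in $L^2$), and that you make explicit the check $\hb(\delta,u_1)<0$ from \eqref{hb+1}, which the paper leaves implicit.
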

\begin{proof}
By \eqref{smallness3} and \eqref{deltau1}, we have 
$$C\delta|u_1|^{-1}a^{2}b^{\frac{(1+\sqrt{\kappa})^2}{\sqrt{\kappa}}}A\leqslant 1,$$
then \eqref{smallness1} and \eqref{smallness2} hold, and hence the solution exists and all estimates of  Theorems \ref{apriori1}, \ref{apriori2} hold.

From the wave equation, 
$$\Db(rL\phi)=-\Omega^2 h\Lb\phi,$$
we have, for some universal constant $c,c'>0$, using \eqref{Lbphi} and \eqref{h},
$$|rL\phi\big|_{C_{u_1}}-rL\phi\big|_{C_{u_0}}|\leqslant c'\int_{u_0}^{u_1}|-\Omega^2 h\Lb\phi|\D u\leqslant c\delta|u_1|^{-1}a.$$
Thus we have
$$|rL\phi\big|_{C_{u_1}}|\geqslant | rL\phi\big|_{C_{u_0}}|-c\delta|u_1|^{-1}a,$$
and then
$$|rL\phi\big|_{C_{u_1}}|^2\geqslant \frac{1}{2}| rL\phi\big|_{C_{u_0}}|^2-(c\delta|u_1|^{-1}a)^2.$$
By integrating relative to $\ub\in[0,\delta]$,  we have
\begin{align*}&\int_0^\delta|rL\phi\big|_{C_{u_1}}|^2\D\ub\geqslant \frac{1}{2}\int_0^{\delta}|rL\phi\big|_{C_{u_0}}|^2\D\ub-c^2\delta^3|u|^{-2}a^2\\\geqslant &\frac{c_1}{2}f^2(u_1)|u_1|-c^2f^2(u_1)|u_1|\geqslant\frac{c_1}{4}f^2(u_1)|u_1|,\end{align*}
where we have used \eqref{deltau1} at the second step and  set $c_1=\min\{4c^2,2^6\}$.

From the equation \eqref{Dh2},
$$Dh=-r\Omega^{-2}(L\phi)^2-r\Omega^{-2}(1+\kappa)\varrho(U_L)^2\leqslant-r\Omega^{-2}(L\phi)^2 ,$$
we have, on $C_{u_1}$, using \eqref{geometry} and \eqref{Omega0<=f},
\begin{align*}h-h_0\leqslant& -2^{-3}\Omega_0^{-2}(u_1)|u_1|^{-1}\int_0^\delta|rL\phi\big|_{C_{u_1}}|^2\D\ub\\
\leqslant&-2^{-3}f^{-2}(u_1)|u_1|^{-1}\int_0^\delta|rL\phi\big|_{C_{u_1}}|^2\D\ub\\\leqslant& -c_1\cdot2^{-5}\leqslant-2,\end{align*}
since $h_0\leqslant 1$, we have $h<0$, which completes the proof.

\end{proof}

We finally turn to the proof of Theorem \ref{main}. 
\begin{proof}[Proof of Theorem \ref{main}]
It is easy to see there are $a,b,A\geqslant1$ depending on $\kappa$ and $B$ in Theorem \ref{main} such that the assumptions of bounds for initial data on $\Cb_0$ in Theorems \ref{apriori1}, \ref{apriori2} hold, and the bounds for initial data on $C_{u_0}$ are trivially true, by choosing possibly larger $a,b$ and $A$. Therefore we can apply Theorem \ref{fots}. We will construct $\phi_t$ verifying all  requirements in Theorem \ref{fots}.

From the assumptions of Theorem \ref{main}, we can find $\beta>0$ depending on $\kappa$ and $B$ such that the right hand side of \eqref{DbhbCb0}, the restriction of \eqref{Dbhb2} on $\Cb_0$, is bounded from above by $-\frac{\beta}{|u|}$.  We therefore have, for $u\in [u_0,0)$,
$$\beta(\log|u_0|-\log|u|)\leqslant \log\Omega_0(u_0)-\log\Omega_0(u).$$
So we will have, for some $c_\beta$ depending on $\Omega_0(u_0)$ and $\beta$, 
$$\Omega_0(u)\leqslant c_\beta|u|^{\beta}, u\in [u_0,0).$$

Set $f(u)=c_\beta|u|^\beta$. Note that under the relation \eqref{deltau1}, the condition \eqref{lowerbound} becomes
$$\int_0^{\delta}|rL\phi\big|_{C_{u_0}}|^2\D\ub\geqslant C(c_1,c_\beta,a) \delta\cdot\delta^{\frac{4\beta}{2\beta+3}},$$
for some constant $C(c_1,c_\beta,a)$. Therefore, given any $\alpha\in\left(0,\frac{2\beta}{2\beta+3}\right)$, and any $t$ sufficiently small, we can find $\delta_t>0$ depending on various constants, including $t$, such that
\begin{align}\label{deltat}\int_0^{\delta_t} (t\ub^{\alpha})^2\D\ub\geqslant 2C(c_1,c_\beta,a) \delta_t\cdot\delta_t^{\frac{4\beta}{2\beta+3}},\end{align}
and $u_{1t}$ defined by $\delta_t a^{\frac{2}{3}}=(f(u_{1t}))^{\frac{2}{3}}|u_{1t}|$ as in \eqref{deltau1} verifies \eqref{smallness3}.
We moreover choose a $\tilde{\delta}_t\in(0,\delta_t)$ such that
\begin{align*}\int_{\tilde{\delta}_t}^{\delta_t} (t\ub^{\alpha})^2\D\ub = \frac{1}{2}\int_0^{\delta_t}(t\ub^{\alpha})^2\D\ub\geqslant C(c_1,c_\beta,a) \delta_t\cdot\delta_t^{\frac{4\beta}{2\beta+3}}.\end{align*}
Now we may choose a smooth $\phi_t$ such that 
$$rL\phi_t=\begin{cases}0, &0\leqslant\ub\leqslant\frac{\tilde{\delta}_t}{2}\\\text{monotonically increasing and derivative $\leqslant 2\alpha t\ub^{\alpha-1}$},  &\frac{\tilde{\delta}_t}{2}\leqslant \ub\leqslant\tilde{\delta}_t\\  t\ub^\alpha,& \tilde{\delta}_t\leqslant\ub\leqslant 1\end{cases}.$$
Note that for all small $t$, $a$ can be chosen independent of $t$. Then such $\phi_t$ verifies all requirements in   Theorem \ref{fots} and hence $S_{\delta_t,u_{1t}}$ is a closed trapped surface in the corresponding maximal development. 

To show the last statement in Theorem \ref{main}, given any such $\phi_t$ constructed above, let $g$ be a smooth function in $\ub\in[0,1]$ be such that $g(0)=0$ and 
$$\|g-rL\phi_t\|_{C^\alpha[0,1]}<t^2.$$
Then in particular we have
$$|g(\ub)-rL\phi_t(\ub,u_0)|<t^2\ub^\alpha, \ub\in[0,1],$$
we then have
\begin{align*}\int_0^{\delta_t}|g(\ub)|^2\D \ub\geqslant& \frac{1}{2}\int_0^{\delta_t}|rL\phi_t(\ub,u_0)|^2\D\ub-\int_0^{\delta_t}t^4\ub^{2\alpha}\D\ub\\\geqslant& \frac{1}{2}\int_{\tilde{\delta}_t}^{\delta_t}t^2\ub^{2\alpha}\D\ub-\int_0^{\delta_t}t^4\ub^{2\alpha}\D\ub\\
=&\frac{1}{4}\int_0^{\delta_t}t^2\ub^{2\alpha}\D\ub-\int_0^{\delta_t}t^4\ub^{2\alpha}\D\ub\\
\geqslant&\frac{1}{8}\int_0^{\delta_t}t^2\ub^{2\alpha}\D\ub\end{align*}
for $t$ sufficiently small. By choosing $\delta_t$ smaller such that the number $2$ is replaced by $8$ in \eqref{deltat}, we then have
$$\int_0^{\delta_t}|g(\ub)|^2\D\ub\geqslant C(c_1,c_\beta,a) \delta\cdot\delta^{\frac{4\beta}{2\beta+3}}.$$
Then if we set another initial data $rL\phi\big|_{C_{u_0}}=g$, it still verifies all requirement in Theorem \ref{fots} and hence a close trapped surface still forms. Then the open set in the last statement in Theorem \ref{main} can be chosen as
$$\bigcup_{\text{all sufficiently small}\ t}\{\text{smooth initial data}\ rL\phi| \|rL\phi-rL\phi_t\|_{C^\alpha[0,1]}<t^2\}$$
which is open as a union of open sets. The proof is then completed.
\end{proof}


\begin{thebibliography}{99}


\bibitem{An24} X. An, \textit{Naked Singularity Censoring with Anisotropic Apparent Horizon}, arXiv:2401.02003.


\bibitem{A-T24} X. An, H. K. Tan, \textit{A Proof of Weak Cosmic Censorship Conjecture for the Spherically Symmetric Einstein-Maxwell-Charged Scalar Field System}, arXiv:2402.16250



\bibitem{Cho93} M.W. Choptuik, \textit{Universality and scaling in gravitational collapse of a massless scalar field}, Phys. Rev. Lett., 70, 9–12, (1993).


\bibitem{Chr84} D. Christodoulou, \textit{Violation of cosmic censorship in the gravitational collapse of a dust cloud}, Commun. Math. Phys. 93, 171–195 (1984). 


\bibitem{Chr91} D. Christodoulou, \textit{The formation of black holes and singularities in spherically symmetric gravitational collapse}, Communications on Pure and Applied Mathematics 44, no. 3 (1991): 339-373.

\bibitem{Chr93} D. Christodoulou, \textit{Bounded variation solutions of the spherically symmetric einstein-scalar field equations}, Communications on Pure and Applied Mathematics 46, no. 8 (1993): 1131-1220.

\bibitem{Chr94} D. Christodoulou, \textit{Examples of Naked Singularity Formation in the Gravitational Collapse of a Scalar Field}, Annals of Mathematics, (1994) 140(3), 607--653.

\bibitem{Chr99} D. Christodoulou, \textit{The instability of naked singularities in the gravitational collapse of a scalar field}, Ann. of Math. 149, 183-217 (1999).

\bibitem{Chr99cqg} D. Christodoulou, \textit{On the global initial value problem and the issue of singularities}, Class. Quantum Grav. (1999)  16 A23.

\bibitem{Chr08} D. Christodoulou, \textit{The Formation of Black Holes in General Relativity}, Monographs in Mathematics, European Mathematical Soc. 2009.

\bibitem{Cl-Kr85} C.J.S. Clarke, A. Kr\'olak, \textit{Conditions for the occurence of strong curvature singularities}, Journal of Geometry and Physics, Volume 2, Issue 2, 127--142, 1985.



\bibitem{G-H-J23} Y. Guo, M. Hadzic, M., J. Jang, \textit{Naked Singularities in the Einstein-Euler System}, Ann. PDE 9, 4 (2023). 




\bibitem{Li-Liu22} J. Li, and J. Liu \textit{Instability of spherical naked singularities of a scalar field under gravitational perturbations}, Journal of Differential Geometry, 120(1): 97-197, 2022.




\bibitem{L-Z3} J. Li and X. P. Zhu, in preparation.


\bibitem{Liu-Li18} J. Liu and J. Li, \textit{A robust proof of the instability of naked singularities of a scalar field in spherical symmetry}, Comm. Math. Phys. 363 (2018), no. 2, 561–578




\bibitem{Newman86cqg} R. P. Newman, \textit{Strengths of naked singularities in Tolman-Bondi spacetimes}, Classical and Quantum Gravity, 3, 527-539, 1986.

\bibitem{Newman86} R. P.  Newman,  \textit{Cosmic Censorship and the Strengths of Singularities}. In: Bergmann, P.G., De Sabbata, V. (eds) Topological Properties and Global Structure of Space-Time. NATO ASI Series. Springer, Boston, MA, 1986.



\bibitem{O-P90} A. Ori, T. Piran, \textit{Naked singularities and other features of self-similar general-relativistic gravitational collapse}, Phys. Rev. D 42, 1068, 1990

\bibitem{Pen69} R. Penrose, \textit{Gravitational collapse: the role of general relativity}, Noovo Cimento 1, 252 - 276 (1969).


\bibitem{R-Sh23} I. Rodnianski, Y. Shlapentokh-Rothman, \textit{Naked singularities for the Einstein vacuum equations: The exterior solution}, Ann. Math., 198 (2023), 1, 231-391



\bibitem{JS24} J. Singh, \textit{High regularity waves on self-similar naked singularity interiors: decay and the role of blue-shift}, arXiv: 2402.00062



\bibitem{S-J96} T. P. Singh,  P. S. Joshi, \textit{The final fate of spherical inhomogeneous dust collapse}, Class. Quantum Grav. 13 559, 1996





 
\bibitem{Tip77} F. J. Tipler, \textit{Singularities in conformally flat spacetimes}, Physics Letters A, 64, 1, 8--10, 1977





\end{thebibliography}
\end{document}